\newcommand{\LIB}[1]{#1}   
\newcommand{\UNLIB}[1]{} 
\newcommand{\PODS}[1]{\ifthenelse{\boolean{PODS}}{#1}{}}
\newcommand{\FULL}[1]{\ifthenelse{\boolean{PODS}}{}{#1}}
\newcommand{\Head}{{\mathit head}}
\newcommand{\Tail}{{\mathit tail}}
\newcommand{\pol}{{\rm pol}}
\newcommand{\Instance}{{\mathit inst}}
\newcommand{\Attr}{{\mathit attr}}
\newcommand{\Label}{\mathit{label}}
\newcommand{\Mark}{\mathit{mark}}
\newcommand{\PD}{\mbox{\it PD}}
\newcommand{\IS}{\mbox{\it IS}}
\newcommand{\FAIL}{{\sc fail}}
\newcommand{\DONE}{{\sc done}}
\newcommand{\NIL}{{\sc nil}}
\newcommand{\CC}{{\cal C}}
\newcommand{\GG}{{\cal G}}
\newcommand{\FC}{\mbox{\sc FC}}
\newcommand{\SUPER}{\mbox{\rm $[\![$LOGSPACE$_\pol]\!]^{\log}$}}
\newcommand{\GC}{\mbox{\rm GC}}
\newcommand{\DSPACE}{\mbox{\sc DSPACE}}
\newcommand{\FDSPACE}{\mbox{\sc FDSPACE}}
\newcommand{\HH}{{\cal H}}
\newcommand{\NNN}{{\mathbb{N}}}
\newcommand{\dspace}[1]{{\mbox{\rm\sc DSPACE[}#1{\rm]}}}
\newcommand{\DUAL}{{\mbox{\sc Dual}}}
\newcommand{\coDUAL}{\overline{\mbox{\sc Dual}}}
\newtheorem{theorem}{Theorem}[section]
\newtheorem{corollary}{Corollary}[section]
\newtheorem{proposition}{Proposition}[section]
\newtheorem{lemma}{Lemma}[section]
\newtheorem{claim}{Claim}[section]
\newtheorem{conj}{Conjecture}[section]
\newtheorem{fact}{Fact}[section]
\newcommand{\nop}[1]{}
\newtheorem{thm}{Theorem}
\newtheorem{corollary}[thm]{Corollary}
\newtheorem{proposition}[thm]{Proposition}
\newtheorem{lemma}[thm]{Lemma}
\begin{document}

\FULL{
\conferenceinfo{PODS'13,} {June 22--27, 2013, New York, New York, USA.} 
\CopyrightYear{2013} 
\crdata{978-1-4503-2066-5/13/06} 
\clubpenalty=10000 
\widowpenalty = 10000
}

\vspace{-2cm}

\title{Deciding Monotone Duality and Identifying Frequent Itemsets  in Quadratic Logspace\thanks{This is a preprint of a Paper with the same title, which appeared in: Proceedings of the 32nd ACM SIGMOD-SIGACT-SIGART  Symposium 
               on Principles of Database Systems, PODS 2013, New York, NY, USA, June 22 - 27, 2013, pp. 25--36.}}
\date{}

\FULL{
\author{Georg Gottlob\\
Department of Computer Science and Oxford Man Institute\\
University of Oxford, Oxford OX1\,3QD,  UK\\
georg.gottlob@cs.ox.ac.uk}
\date{}}

\nop{
\numberofauthors{1} 
\author{
\alignauthor
Georg Gottlob \\
       \affaddr{Department of Computer Science and Oxford Man Institute}\\
       \affaddr{University of Oxford}\\
       \affaddr{Oxford OX1\,3QD, UK}\\
       \email{georg.gottlob@cs.ox.ac.uk}}
}
\maketitle

\begin{abstract} 
\vspace{-0.2cm}
The monotone duality problem is defined as follows: Given two monotone formulas $f$ and $g$ in irredundant DNF, decide whether $f$ and $g$ are dual. This problem is the same as duality testing for hypergraphs, that is, checking whether a hypergraph $\cal H$ consists of precisely all minimal transversals of a hypergraph $\GG$. By exploiting a recent problem-decomposition method by Boros and Makino (ICALP 2009), we show that duality testing for hypergraphs, and thus for monotone DNFs,  is feasible in 
$\dspace{\log^2 n}$, i.e., in quadratic logspace. As the monotone duality problem is equivalent to a number of problems in the areas of databases, data mining, and knowledge discovery, the results presented here yield new complexity results for those problems, too. For example, it follows from our results that whenever, for a Boolean-valued relation  (whose attributes represent items), a number of maximal frequent itemsets and a number of minimal infrequent itemsets are known, then it can be decided in quadratic logspace  whether there exist additional frequent or infrequent itemsets. 
\FULL{\vspace{0.5cm}

\noindent {\bf Keywords:} Duality testing, frequent item set, hypergraph, transversal, data mining.} 
\end{abstract}

\vspace{-0.3cm}
\nop{
\category{\vspace{-0.2cm}F.2.2}{Analysis of Algorithms and Problem Complexity}{Nonnumerical Algorithms and Problems}[Computations on discrete structures]
\category{\\G.2.2}{Discrete Mathematics}{Graph Theory}[Graph algorithms,hypergraphs]
\category{\\H.2.8}{Database Management}{Database Applications}[Data mining]
\keywords{\vspace{-0.2cm}Duality testing, frequent item set, hypergraph, transversal, data mining.}
\vspace{-0.3cm}
}
\section{{Introduction}}
This paper derives new complexity bounds for the problem {\DUAL} of deciding whether two  irredundant monotone  Boolean formulas in DNF are mutually dual, or, equivalently, of deciding whether two simple hypergraphs are dual, i.e., whether  each of these hypergraphs consists precisely of the minimal transversals of the other.  While the exact complexity remains open, there is progress: We prove \LIB{in the present paper} a DSPACE[$\log^2 n$] upper bound for $\DUAL$, and another,  presumably tighter bound \LIB{for the same problem,} that is expressed in terms of sophisticated machine-bounded complexity classes.  The {\DUAL} problem is actually one of the  most mysterious problems  in theoretical computer science. It has many applications, especially in the database, data mining, and knowledge discovery areas~\cite{EG95,EGJelia,guno-etal-97,hagenDiss}, some of which will be mentioned below. Let us first describe the {\DUAL} problem more formally.

\smallskip

{\bf Duality testing for monotone  DNFs and hypergraphs.}\ 
\LIB{A pair of}\UNLIB{Two} Boolean formulas\LIB{\ $f$ and $g$}\UNLIB{$f(x_1,x_2,\ldots,x_n)$ and $g(x_1,x_2,\ldots,x_n)$} on 
pro\-po\-si\-tio\-nal variables $x_1,x_2,\ldots,x_n$ are {\it dual} if 
$$f(x_1,x_2,\ldots,x_n)\equiv \neg g(\neg x_1,\neg x_2,\ldots,\neg x_n).$$
A monotone DNF is {\em irredundant} if the set of variables in none of its disjuncts is covered by the variable set of any other disjunct.   The {\em duality testing problem} {\DUAL} is the problem of testing whether two irredundant monotone DNFs $f$ and $g$ are dual.

A {\em hypergraph} $\cal H$ is a finite family of finite sets (also called {\em hyperedges}) defined over some set of {\em vertices}  $V({\cal H})$. $\HH$ is simple if no hyperedge is contained in another one.  By default, if $V({\cal H})$ is not explicitly specified, the set of vertices of $\cal H$ is $\bigcup_{E\in {\cal H}} E$. A {\em transversal of} $\cal H$ 
is a subset of $V({\cal H})$  that meets all hyperedges of $\cal H$, and a {\em minimal transversal} of $\cal H$ is a transversal of $\cal H$ that does not contain any other transversal as subset. The set of all minimal transversals of a hypergraph $\cal H$ is denoted by $tr({\cal H})$. 
The {\em Hypergraph Duality Problem} is the problem of deciding for two simple hypergraphs $\cal G$ and $\cal H$ whether ${\cal G}=tr({\cal H})$. Assume $\GG\subseteq tr(\HH)$, 
then, in case $\GG\neq tr(\HH)$, to witness this,  one may want to exhibit a {\em new transversal of $\HH$ with respect to $\GG$}. This is a transversal of $\HH$ that has no hyperedge  of $\GG$ as subset. Obviously, every  new transversal $H$ contains at least one {\em new minimal transversal  of } $\HH$ {\em w.r.t.} $\GG$, but it need not be minimal itself.

It is well known that DNF duality  and hypergraph duality  are actually the same problem (see~\cite{EG95}). In fact, two irredundant monotone DNFs $f$ and $g$ are dual iff their hypergraphs are dual. The hypergraph associated to a monotone DNF has precisely one hyperedge for each disjunct, consisting of the set of all variables of this disjunct. Vice versa, one can trivially associate an irredundant  DNF to each simple hypergraph and thus reduce hypergraph duality to DNF duality. Given that these problems essentially coincide (and can be reduced to each other via trivial reductions that are much easier than logspace reductions), we regard them as one and the same problem, which we refer to as {\DUAL}.

\medskip

{\bf The duality problem in data mining, database theory, and knowledge discovery.}  \  The {\DUAL} problem is at the core of a number of important data mining and database problems. It is central,  for example, to the determination of the maximal frequent and minimal infrequent sets in data mining. More precisely, consider a Boolean-valued data relation $M$ over a set $S$ of attributes called {\em items}, 
and a threshold $z$ with $0<z\leq |M|$. Each subset $U\subseteq S$ is called an {\em itemset}. 
For each tuple $t$ of $M$, let $items(t)=\{A\in S\,\mid\,t[A]=1\}$. The frequency $f(U)$ for an itemset $U$ is the number of tuples $t$ of $M$, such that $U\subseteq items(t)$. U is {\em frequent} if  $f(U)>z$ and infrequent otherwise.

 In data mining, one considers the {\em maximal frequent itemsets}  and the {\em minimal infrequent itemsets}  (under set inclusion) for $M$ and $z$. Let us refer to the former as $\IS^+(M,z)$ and to the latter as  $\IS^-(M,z)$. Clearly, both
$\IS^+(M,z)$ and $\IS^-(M,z)$ are simple hypergraphs over $S$, and we abbreviate them by $\IS^+$ and $\IS^-$, respectively, when $M$ and $z$ are understood. 

The maximal frequent itemsets $\IS^+$ are of great interest in the context of  data mining, but they are hard to compute. In fact, as  shown in~\cite{BGKM2002,BGKM2003} that for a given Boolean-valued relation $M$,  a threshold $z$ and a set $S\subseteq\IS^+(M,z)$, deciding whether there are additional maximal frequent itemsets, i.e., whether $S\neq\IS^+(M,z)$, is NP complete. It follows that, assuming NP$\neq$ P, there cannot be any algorithm for  enumerating $IS^+(M,z)$ with polynomial delay, and that under the slightly weaker assumption NP$\not\subseteq$DTIME[$n^{\small \mbox{polylog}\ \mbox{\em n}}$], there is no algorithm  enumerating $IS^+(M,z)$ with qua\-si\-po\-lynomial delay either. For this reason, rather than computing $\IS^+$ only, one often computes  $\IS^+\cup \IS^-$, which may be exponentially larger in the worst case, but has the advantage of being computable with quasipolynomial delay~\cite{BGKM2003}. As a fundamental result towards the aim of jointly computing $\IS^+$ and $\IS^-$, it was shown 
in~\cite{guno-etal-97} that the minimal infrequent itemsets are exactly the minimal transversals of the complements of the maximal frequent itemsets, i.e. $\IS^-=tr(\,\IS^{+c}\,)$, and thus also $\IS^+=  {tr(\IS^-)}{}^c$, where for $A\subseteq 2^S$,\  ${A^c}=\{S-A|A\in S\}$.
Let {\sc MaxFreq-MinInfreq-Identifi\-ca\-tion} be the following decision problem in data mining: 
Given $M$, $z$, a set  $\GG\subseteq \IS^-(M,z)$, and a set $\HH\subseteq  \IS^+(M,z)$, 
decide whether $\HH= \IS^+(M,z)$ and $\GG=\IS^-(M,z)$,  that is, whether there exists no additional  maximal  frequent or minimal infrequent  itemset for $M$ and $z$, that is not already in $\GG\cup \HH$.  In~\cite{guno-etal-97} it was shown that that there exist no such additional itemset iff 
$\GG=tr(\HH^c)$. 
 With regard to the computational complexity, we thus have:
\begin{proposition}[\cite{guno-etal-97}]
\label{prop:frequent} {\sc \hspace{-2mm}  MaxFreq-MinInfreq-Iden\-ti\-fi\-cation}  is logspace-equivalent to  {\DUAL}. 
\end{proposition}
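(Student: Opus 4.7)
The plan is to establish logspace reductions in both directions, leveraging the characterization from~\cite{guno-etal-97} recalled in the excerpt: under the promise $\GG\subseteq\IS^-$ and $\HH\subseteq\IS^+$, the identification instance answers \emph{yes} if and only if $\GG=tr(\HH^c)$.

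For the direction {\sc MaxFreq-MinInfreq-Identification}$\le_{\log}${\DUAL}, I would map an input $(M,z,\GG,\HH)$ with item set $S$ to the {\DUAL} instance $(\GG,\HH^c)$, where $\HH^c$ is obtained edge-wise by complementing each hyperedge of $\HH$ with respect to $S$. A standard logspace transducer iterates over the items of $S$ and the edges of $\HH$ to emit $\HH^c$. Both $\GG$ and $\HH^c$ are simple (since $\HH\subseteq\IS^+$ is simple and complementation preserves simpleness), so the output is a legitimate {\DUAL} instance, and by the characterization the two decision problems agree on it.

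For the converse {\DUAL}$\le_{\log}${\sc MaxFreq-MinInfreq-Identification}, given simple hypergraphs $\GG,\HH$ on vertex set $V$, I would first verify in logspace that every $G\in\GG$ is a minimal transversal of $\HH$, by checking that $G\cap H\neq\emptyset$ for every $H\in\HH$ and that for each $v\in G$ some $H\in\HH$ satisfies $G\cap H=\{v\}$. Both tests use only three nested counters over $\GG$, $\HH$, and $V$ and thus fit in logarithmic space. If the check fails then $\GG\neq tr(\HH)$, and the reduction outputs any fixed \emph{no}-instance of identification (for example $(M',0,\emptyset,\emptyset)$ where $M'$ has a single tuple with one item set to $1$, so that $\IS^+=\{\{1\}\}\neq\emptyset$). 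Otherwise the reduction outputs $(M,0,\GG,\HH^c)$, where $M$ has one tuple $t_E$ per edge $E\in\HH$ whose value on attribute $v$ is $1$ iff $v\notin E$. A routine verification gives $\IS^+(M,0)=\HH^c$, since the frequent itemsets are exactly the subsets of the sets $V\setminus E$ for $E\in\HH$, and their maximal elements are the $V\setminus E$ themselves by simpleness of $\HH$; the Gunopulos identity then yields $\IS^-(M,0)=tr(\HH^{cc})=tr(\HH)$. The promise $\GG\subseteq\IS^-$ is exactly what the pre-check guarantees, and $\HH^c\subseteq\IS^+$ holds with equality. Consequently the identification instance is a \emph{yes} instance iff $\GG=tr(\HH)$, matching {\DUAL}.

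The main obstacle I foresee is respecting the promise $\GG\subseteq\IS^-(M,z)$ demanded by the identification problem: a generic {\DUAL} input need not satisfy the analogous condition $\GG\subseteq tr(\HH)$, so the explicit logspace pre-check of transversal-minimality is essential. Once that is in place, the remainder is routine bookkeeping around the Gunopulos identity.
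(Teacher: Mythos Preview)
The paper does not prove this proposition; it merely cites it from~\cite{guno-etal-97} after recalling the key identity $\IS^-=tr((\IS^+)^c)$ and the equivalence ``no additional itemset exists iff $\GG=tr(\HH^c)$''. So there is no ``paper's own proof'' to compare against; your write-up is, in effect, supplying the argument that the paper leaves implicit.

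Your two reductions are sound and do the job. The forward reduction is the obvious one (complement each maximal frequent set and feed $(\GG,\HH^c)$ to {\DUAL}); the backward reduction is the standard trick of encoding a simple hypergraph $\HH$ as a Boolean relation whose tuple-itemsets are the complements $V\setminus E$, so that $\IS^+=\HH^c$ and hence $\IS^-=tr(\HH)$. Your logspace pre-check that $\GG\subseteq tr(\HH)$ is exactly what is needed to meet the promise of the identification problem, and it is clearly feasible with a constant number of nested scans.

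One small technical slip: the paper fixes the threshold to satisfy $0<z\leq |M|$, so your choice $z=0$ (both in the main reduction and in the fixed \emph{no}-instance) is formally outside the problem's domain. This is easy to repair without changing the idea: add one extra tuple $t^\ast$ with $items(t^\ast)=V$ to your relation $M$ and take $z=1$. Then every itemset is contained in $t^\ast$, so an itemset $U$ has $f(U)>1$ iff some $t_E$ with $E\in\HH$ also contains it, i.e., iff $U\subseteq V\setminus E$ for some $E$; thus $\IS^+(M,1)=\HH^c$ and $\IS^-(M,1)=tr(\HH)$ exactly as you intended. The fixed \emph{no}-instance can be adjusted analogously.
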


The results of~\cite{guno-etal-97} are at the base of a host of algorithms for maximal frequent itemset generation, that compute both $\IS^+$ and $\IS^-$ incrementally. These algorithms initialize $\GG$  and ${\HH^c}$ with some easy to compute subsets of $\IS^-$ and $\IS^{+c}$, respectively. 
Then, at each step they check  whether for the current sets $\GG=tr({\HH}^c)$ is true, and if not, compute one or more new transversals from which  new maximal frequent itemsets or minimal infrequent itemsets  can be computed easily, see, e.g.~\cite{ManiToiv,Loo-etal,Guno-etal,BGKM2002,SatohUno}.  Thus, not only the decision problem {\DUAL} is of relevance to data mining, but also the problem of effectively computing a new transversal that acts a witness that $\GG\neq tr({\HH}^c)$. 
In the present paper, we will obtain results on  the complexity of this latter problem, too.

Another interesting related database problem is the {\sc additional key for instance} problem for explicitly given relational in\-stan\-ces. Given a relational instance  $R$ over attribute set $S$, and a set $K$ of minimal keys for $R$, determine if there exists a minimal key for $R$ that is not already contained in $K$.  This problem, which has been shown equivalent to $\coDUAL$ in the early nineties~\cite{EG95}, may be of  renewed interest in the age of Big Data, where massive data tables arise and have to be analyzed, and where the automatic recognition of structural features such as minimal keys may be useful.

\begin{proposition}[\hspace{-0.0mm}\cite{EG95}]
\hspace{-2mm} \,The {\sc additional key for in\-s\-t\-an\-ce}
problem is logspace equivalent to $\coDUAL$. Moreover, enumerating the minimal keys of a relational instance $R$ is equivalent to 
enumerating the set $tr(\HH)$ for some  hypergraph $\HH$ that is logspace-computable from $R$.
\end{proposition}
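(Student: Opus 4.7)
The plan is to associate with every relational instance $R$ over attribute set $S$ a hypergraph $\HH_R$ on vertex set $S$, the \emph{disagreement hypergraph}, whose hyperedges are the sets $D(t,t') = \{A\in S\mid t[A]\neq t'[A]\}$ for all pairs of distinct tuples $t,t'\in R$. The starting observation is that $K\subseteq S$ is a key of $R$ iff for every pair of distinct tuples $t,t'\in R$ we have $K\cap D(t,t')\neq\emptyset$, i.e., iff $K$ is a transversal of $\HH_R$. Consequently the minimal keys of $R$ are exactly the minimal transversals of $\HH_R$. Since adding supersets of existing hyperedges does not change the set of minimal transversals, we may replace $\HH_R$ by the simple subhypergraph $\HH_R^{\min}$ of its inclusion-minimal hyperedges; this simplification is computable in logspace (one iterates over pairs of tuples and, for each candidate disagreement set, scans all other pairs to test strict inclusion using only a constant number of tuple pointers). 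Thus $\HH_R^{\min}$ is logspace-computable from $R$ and $tr(\HH_R^{\min})$ is exactly the set of minimal keys of~$R$.

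First I would derive the reduction from \textsc{additional key for instance} to $\coDUAL$. Given an input $(R,\KK)$ where $\KK$ is a set of minimal keys, output the pair $(\KK,\HH_R^{\min})$. Since $\KK$ is a subset of $tr(\HH_R^{\min})$, there exists an additional minimal key iff $\KK\neq tr(\HH_R^{\min})$, i.e., iff $(\KK,\HH_R^{\min})\in\coDUAL$. The entire construction is computable in logspace.

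For the converse reduction $\coDUAL\leq_L$ \textsc{additional key for instance}, given a pair of simple hypergraphs $(\GG,\HH)$ with $V(\HH)=S$ and $\HH=\{E_1,\dots,E_m\}$, I would build a relational instance $R_\HH$ on attributes $S$ as follows: for each $i\in[m]$ take two tuples $t_i^0,t_i^1$ defined by $t_i^b[A]=(i,b)$ if $A\in E_i$ and $t_i^b[A]=0$ if $A\notin E_i$. A direct inspection then shows that the disagreement set of $(t_i^0,t_i^1)$ equals $E_i$, while the disagreement set of any pair from different blocks $i\neq j$ equals $E_i\cup E_j$. Hence the minimal hyperedges of the disagreement hypergraph of $R_\HH$ are exactly the $E_i$'s, i.e., $\HH_{R_\HH}^{\min}=\HH$; the mapping $(\GG,\HH)\mapsto(R_\HH,\GG)$ is clearly in logspace and, by the first part, an additional minimal key for $(R_\HH,\GG)$ exists iff $\GG\neq tr(\HH)$, i.e., iff $(\GG,\HH)\in\coDUAL$. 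The enumeration equivalence is an immediate corollary: $R\mapsto\HH_R^{\min}$ and $\HH\mapsto R_\HH$ are both logspace constructions under which the minimal keys of $R$ and the minimal transversals of $\HH$ coincide element-wise.

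The main obstacle I expect is the second construction: one must design $R_\HH$ so that no spurious minimal disagreement set is introduced by cross-block pairs, and so that all values are logspace-encodable in terms of the input. The encoding $(i,b)$ above handles this because every cross-block disagreement set is a union $E_i\cup E_j$ of two existing hyperedges and therefore non-minimal; the pair indices $i\in[m]$ and $b\in\{0,1\}$ require only $O(\log n)$ bits, so the instance $R_\HH$ is produced by a logspace transducer that, on input $(\GG,\HH)$, simply iterates over $i,b,A$ and writes the corresponding cell value. Verifying simplicity and legality conditions on $\GG$ and $\HH^{\min}$ required by $\coDUAL$ is also a routine logspace check, completing the proof of both logspace-equivalences stated in the proposition.
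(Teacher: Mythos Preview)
The paper does not give its own proof of this proposition; it is quoted from~\cite{EG95} without argument. Your proposal reconstructs precisely the standard proof from that reference: the disagreement (or ``antikey'') hypergraph $\HH_R$ whose edges are the sets where two tuples differ, and the converse realisation of an arbitrary simple $\HH$ by a relation with two tuples per hyperedge. Both constructions and the transversal/key correspondence are exactly those of Eiter and Gottlob, so your approach is the intended one.

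One point deserves to be made explicit rather than swept into the last sentence. In the reduction $\coDUAL\leq_L\textsc{additional key for instance}$, the target problem presupposes that the second input really is a set of minimal keys of the first, i.e., that $\GG\subseteq tr(\HH)$. Your map $(\GG,\HH)\mapsto(R_\HH,\GG)$ is therefore only a legal instance when this inclusion holds. You should state that one first tests $\GG\subseteq tr(\HH)$ (and $\HH\subseteq tr(\GG)$) in logspace; if either test fails, the pair is already in $\coDUAL$ and one outputs a fixed positive instance of the key problem. With that case distinction spelled out, the argument is complete.
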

Other related problems equivalent to $\DUAL$ or to $\coDUAL$ deal with the construction of Armstrong relations for sets of functional dependencies~\cite{EG95}, see also~\cite{gott-libk-90,DemeThi95}.

\smallskip

We also wish to briefly mention a problem from the area of distributed databases. For quorum-based updates~\cite{lamp-78} in distributed databases, the concept of {\em coterie}, which is essentially a hypergraph of intersecting quorums has been introduced, and one is specifically interested in so called {\em non-dominated coteries}  
(for definitions and details, see~\cite{Molina85,IbarakiK93}, and for more recent results and applications, see~\cite{MakinoKamedaPODC,MakinoKameda,HaradaYam}). The following was proven:
\begin{proposition}[\cite{IbarakiK93,EG95}]
A coterie $\HH$ is \LIB{non-domi\-na\-ted} \UNLIB{non-dominated} iff $tr(\HH)=\HH$.  
\end{proposition}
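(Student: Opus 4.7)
The plan is to prove both directions of the equivalence by exploiting the two defining properties of a coterie, namely that the hyperedges are pairwise intersecting and that the family is simple (no edge contained in another), together with the fact that domination of $\HH$ by a coterie $\HH'$ means $\HH'\neq\HH$ and every $E\in\HH$ has some $E'\in\HH'$ with $E'\subseteq E$.

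For the direction $tr(\HH)=\HH\Rightarrow \HH$ is non-dominated, I would argue by contradiction: assume a coterie $\HH'\neq\HH$ dominates $\HH$. First I would observe that every $E'\in\HH'$ meets every $E\in\HH$: indeed, pick $F'\in\HH'$ with $F'\subseteq E$ (by domination) and use that $E'\cap F'\neq\emptyset$ since $\HH'$ is intersecting. So every $E'\in\HH'$ is a transversal of $\HH$, and hence contains some $T\in tr(\HH)=\HH$. Domination applied to $T\in\HH$ yields $F''\in\HH'$ with $F''\subseteq T\subseteq E'$, and simplicity of $\HH'$ forces $F''=E'$ and thus $E'=T\in\HH$. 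This gives $\HH'\subseteq\HH$, and a symmetric argument using that each $E\in\HH$ dominates down to some $E'\in\HH'\subseteq\HH$ gives $E'=E$ by simplicity of $\HH$, hence $\HH\subseteq\HH'$, contradicting $\HH\neq\HH'$.

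For the converse, assume $\HH$ is a non-dominated coterie. I would establish $\HH\subseteq tr(\HH)$ and $tr(\HH)\subseteq\HH$ separately, in each case building an explicit coterie that dominates $\HH$ whenever an inclusion fails. For $\HH\subseteq tr(\HH)$: if some $E\in\HH$ properly contains a minimal transversal $T'$ of $\HH$, form $\HH'$ from $\HH$ by removing $E$, adding $T'$, and discarding any $F\in\HH\setminus\{E\}$ with $T'\subsetneq F$. Minimality of $T'$ forbids any $G\in\HH$ from being a proper subset of $T'$, so $T'$ survives; $\HH'$ is intersecting because $T'$ is a transversal of $\HH$; and $\HH'$ dominates $\HH$ (if $F$ was discarded, $T'\subseteq F$ provides the witness; otherwise $F$ itself remains), giving the contradiction. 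For $tr(\HH)\subseteq\HH$: given $T\in tr(\HH)$, form $\HH''$ as the minimal elements of $\HH\cup\{T\}$. Using the just-proved $\HH\subseteq tr(\HH)$, no $E\in\HH$ can be a proper subset of $T$, so $T\in\HH''$; intersection with every $E\in\HH$ is again automatic since $T$ is a transversal. If $T\notin\HH$, then $\HH''\neq\HH$ and $\HH''$ dominates $\HH$, contradicting non-domination.

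The main conceptual obstacle is that $tr(\HH)$ of a general intersecting family need not be intersecting (e.g.\ $\HH=\{\{1,2,3\}\}$), so one cannot simply use $tr(\HH)$ itself as the dominating coterie. The trick I rely on is to make only a local modification to $\HH$ — swapping one edge for a subset, or inserting one new minimal transversal — so that the intersecting property is inherited from $\HH$ via the transversal property of the inserted element, and then to use minimality/simplicity to argue the modified family is genuinely different from $\HH$.
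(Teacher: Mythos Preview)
The paper does not supply its own proof of this proposition; it merely quotes it from \cite{IbarakiK93,EG95}, so there is nothing in the paper to compare against. Evaluating your argument on its own merits: both directions are correct and essentially reproduce the original Ibaraki--Kameda reasoning. Your key steps --- that any $E'\in\HH'$ is a transversal of $\HH$ because it meets the $\HH'$-edge sitting below each $E\in\HH$, and that in the converse you perturb $\HH$ locally by inserting a single minimal transversal rather than trying to use all of $tr(\HH)$ at once --- are exactly the right moves. The justification that no $G\in\HH$ is a proper subset of the inserted $T'$ is also correct: $G$, being an edge of an intersecting family, is itself a transversal of $\HH$, so $G\subsetneq T'$ would contradict minimality of $T'$; the analogous observation handles the survival of $T$ in $\HH''$. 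Your closing remark about why $tr(\HH)$ itself cannot serve as the dominating coterie is well taken and pinpoints the only subtlety in the proof.
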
  

There are a large number of  applications of the {\DUAL} problem and of hypergraph dualization in the areas of knowledge discovery,  machine learning, and more generally in AI and knowledge representation. Just to mention a few: Learning monotone Boolean CNFs and DNFs with membership  queries~\cite{guno-etal-97}, model-based diagnosis~\cite{ReiterDiag87,grei-smit-wilk-90}, computing a Horn approximation to a non-Horn theory~\cite{kavv-etal-93,gogi-etal-98}, and computing minimal abductive explanations to observations~\cite{Eiter-Makino-Abduction}. Surveys of these and other applications and further references can be found 
in~\cite{EGJelia,EG95,hagenDiss}.  

\smallskip

\LIB{\hspace{-0.16cm}}{\bf Known complexity results}. The   exact complexity of {\DUAL} has remained an open problem.  Fredman and Khachiyan~\cite{fred-khac-96} have shown that {\DUAL} is in  DTIME[$n^{o(\log n)}$], more precisely, that it is contained in DTIME[$n^{4\chi(n)+O(1)}$], where $\chi(n)$ is defined by $\chi(n)^{\chi(n)}=n$. \nop{Note that $\chi(n)\sim\log n/\log\log n = o(\log n)$.}  Eiter, Gottlob, and Makino~\cite{EGM03}, and independently, Kavvadias and Stavropoulos~\cite{kavv-stav-03} have shown that {\DUAL} is in the complexity class co-$\beta_2$P, which means that showing that the complement of {\DUAL} can be solved in polynomial time with $O(\log^2 n)$ nondeterministic bits. This  small amount of  nondeterminism can actually be \UNLIB{improved}\LIB{lowered} to $O(\chi(n)\log n)$ which is $o(\log^2 n)$, see~\cite{EGM03}. 

\smallskip

{\bf Research question tackled} The question about the space-effi\-ciency of {\DUAL}, namely,  whether {\DUAL} can be solved using sub-polynomial or even polylogarithmic space was not satisfactorily answered. It was posed (explicitly or implicitly) several times since  1995, for example in~\cite{EG95,tamaki-00,EGMSurvey}. 
This is the main problem we tackle. In addition, we aim at obtaining a better understanding of the {\DUAL} problem in terms of machine-based structural complexity.

\smallskip

{\bf Results.}
We show \FULL{in this paper}\LIB{in this paper} that \LIB{the decision problem}  {\DUAL} is    in the complexity class 
$\dspace{\log^2 n}$, which is a very low class in {\sc POLYLOGSPACE}.
Modulo the assumption that {\sc PTIME}\LIB{$\not\subseteq$}\UNLIB{is not contained in }{\sc POLYLOGSPACE}, which is widely believed, we thus obtain satisfactory evidence 
that {\DUAL} is not PTIME-hard, which answers another complexity question posed in~\cite{EG95,EGMSurvey}. Our results are based on a careful analysis of a recent problem decomposition method by Boros and Makino~\cite{BorosMakino09}. Their decomposition method actually yields  a parallel algorithm that solves 
{\DUAL} on an EREW PRAM  in $O(\log^2 n)$ time using $n^{\log n}$ processors. However, it is currently not known whether such EREW PRAMS can be simulated 
in $\dspace{\log^2 n}$, and this is actually considered to be rather unlikely. However, Boros' and Makino's algorithm does not seem to exploit the full potential of a PRAM, and by taking into account  the restricted pattern of information flow imposed by the specific self-reductions used in their algorithm, we succeeded to  show \UNLIB{membership  of}\LIB{that} $\DUAL$ \LIB{is} in  $\dspace{\log^2 n}$. 

\hspace{-0.3cm}Complexity theorists have very good reasons to assume that the space class $\dspace{\log^2 n}$ is  incomparable with respect to containment to the 
class co-$\beta_2$P. It is thus somewhat unsatisfactory to have two upper bounds 
for {\DUAL} that are incomparable, which suggests that, most likely, there exist better bounds. This encouraged us to look for a tighter upper bound for {\DUAL} in terms of machine-based  complexity models,  that would be contained in both 
$\dspace{\log^2 n}$ and co-$\beta_2$P, and we succeeded to find one. We can, in fact, show that $\coDUAL$ belongs to the "guess and check" class 
$\GC(\log^2 n, \SUPER)$. This somewhat exotic  new machine-based complexity class \UNLIB{contains precisely}\LIB{consists of}
all problems that can be solved by  first guessing $O(\log^2 n)$ bits 
and then checking the correctness of this guess by a procedure in   
{\SUPER}, which is a complexity class contained in PTIME we will define in the present paper.  We hope that this tighter new bound will provide  a better insight into the very nature of the {\DUAL} problem, and possibly hint at the right direction for future research towards finding a matching upper bound.

\medskip

{\bf Roadmap.} The paper is organized as follows. In the next section we discuss decomposition methods for {\DUAL} and give a succinct description of the method of Boros and Makino, which we consider to be the currently most advanced method. In Section 3, we define complexity classes based on iterated self-compositions of functions and prove a useful complexity-theoretic lem\-ma. In Section 4, we use this lemma to prove our main result, namely that {\DUAL} is in   $\dspace{\log^2 n}$. In section 5 we provide our tighter structural complexity bound for {\DUAL}. The paper is concluded in Section 6, where we also exhibit a diagram (Fig.~1) that puts all relevant complexity classes  
in relation, and highlights the new upper bounds.
\FULL{\section{The Decomposition Method by Boros and Makino}}\PODS{\section{Decomposition Method by 
	Boros and Makino}}
\label{sec:bm}
Most algorithms for deciding {\DUAL} rely on decompositions that start with an original {\DUAL} instance and recursively transform it into a conjunction of smaller instances, until each instance is either seen to be a no-instance because it violates necessary conditions for duality, or until it is small and efficiently decidable. Such decompositions are also known as {\em self-reductions}, see, e.g., Section 5.3 of~\cite{EGM03}.\nop{\footnote{Actually, there, {\em disjunctive self-reductions of the complement of \DUAL} were considered.}.}\nop{Such a self-reduction is applied recursively until all instances have small size and are trivially solvable, or until one of the generated instances does not satisfy some necessary condition for duality that can be checked efficiently.} The decomposition process corresponds  in the obvious way to a {\em decomposition tree}.  Different decomposition methods give rise to decomposition trees of different shapes and depths. For example, the well-known algorithm A by Fredman and Khachiyan~\cite{fred-khac-96} produces a "skinny" binary decomposition tree of depth linear in the input volume $|{\cal G}|\times|{\cal H}|$, while their algorithm B produces a non-binary tree of similar depth, but with fewer nodes. Later, decomposition methods giving rise to  trees of polylogarithmic  depth were published. In particular, the methods of Kavvadias and Stavropoulos~\cite{kavv-stav-03}  as well as the two methods by Elbassioni in~\cite{Elbassioni08} give rise to decomposition trees of 
polylogarithmic depth. Finally, decomposition methods yielding  trees of  logarithmic depth  were presented by Gaur~\cite{gaur-99} (see also Gaur and Krishnamurti~\cite{gaur-krish-00}), and, more recently, by Boros and Makino~\cite{BorosMakino09}. As we will show, the logarithmic-depth decomposition trees
generated by these methods can be used to show that {\DUAL} is in  DSPACE[$\log^2 n$]. In particular, we use the elegant decomposition method of Boros and Makino~\cite{BorosMakino09} to prove this, but we could have used Gaur's method~\cite{gaur-99} in a similar fashion.  In the rest of this section, we give a succinct description of the  method of Boros and Makino,  that contains all the essentials we need for our subsequent complexity analysis. It is assumed that the input instance $I=(\GG, \HH)$ we have  $|\HH|\leq |\GG|$,  and that $\GG\subseteq tr(\HH)$ and $\HH\subseteq tr(\GG)$. Clearly this can be  tested in logarithmic space. 

For an input instance  $I=(\GG, \HH)$ of {\DUAL} over a vertex set $V$, let  $T(\GG,\HH)$ denote its decomposition tree. Let $\aleph_\HH =\NNN^0\cup\NNN^1\cup \NNN^2\cup\cdots\cup\NNN^{\lfloor \log |\HH|\rfloor}$, where $\NNN$ denotes the natural numbers, and where $\NNN^0$ is defined to contain the empty  sequence $()$ only, which has length 0. Thus $\aleph_\HH$ contains precisely all sequences of natural numbers of length up to  $\lfloor \log |\HH|\rfloor$. 

Each node  of $T(\GG,\HH)$ has five data structures associated with it: 
\begin{description}
\item[(i)\phantom{ii}]A unique label $\Label(\alpha)$ consisting of a  sequence 
in $\aleph_\HH$. In particular, the root $\alpha_0$ of
$T(\GG,\HH)$ is labeled by $()$, and the $i$-th child of a node labeled $(j_1,\ldots,j_k)$ 
is labeled $(j_1,\ldots,j_k,i)$.
\item[(ii)\phantom{i}]A set $S_\alpha\subseteq V(\GG)$.
\item[(iii)]An instance of {\DUAL} $\Instance(\alpha)=(\GG^{S_\alpha},\HH_{S_\alpha})$, \\
where  $ \GG^{S_\alpha}=\{E\cap S_\alpha\mid  E\in \GG\}\ \mbox{\rm and}\ 
 \HH_{S_\alpha}=\{E\in\HH\mid E\subseteq S_\alpha\},$.
\item[(iv)]A marking $\Mark(\alpha)\in\{\mbox{\sc \DONE, \FAIL, \NIL \}}$, where each leaf of the final decomposition tree will be marked with {\DONE} or {\FAIL}, and each non-leaf will be marked with dum\-my value {\NIL}.
Intuitively, each leaf marked {\DONE} identifies a branch that does not contradict $\HH=tr(\GG)$, 
whereas a leaf marked {\FAIL} identifies a branch that proves that $\HH\neq  tr(\GG)$.
\item[(v)] A set of vertices  $t(\alpha)\subseteq V(\GG)$. This set will be the empty set for each node not marked {\FAIL}, and, in case $\alpha$ is marked {\FAIL},
will contain a witness for $\HH\neq  tr(\GG)$ in form of a new transversal of $\GG$ with respect to $\HH$.
\end{description}

Let us now describe the method for building $T(\GG,\HH)$ and deciding whether $\HH=tr(\GG)$ in detail. At each stage of the algorithm, let us denote the set of current leaf-nodes by $\Lambda$. Here is how the tree is built. The input instance $(\GG,\HH)$ is first transformed into a  initial tree consisting of the root $\alpha_0$ with $\Label(\alpha_0)=()$, $S_{\alpha_0}=V$, 
$\Instance(\alpha_0)=(\GG,\HH)$, $\Mark(\alpha_0)=$\NIL, and $t(\alpha_0)=\emptyset$.
At each stage of the decomposition, first, each leaf $\alpha\in \Lambda$ where $|\HH_{S_\alpha}|\leq 1$, will be marked by the following procedure, and will then not be further expanded and will thus be a leaf of the final tree $T(\GG,\HH)$:

\bigskip

\hrule
\LIB{\smallskip}\UNLIB{\medskip}
\noindent
{\sc procedure marksmall($\alpha$)}: 
\UNLIB{\begin{description}
\item{\sc case 1.} {\sc if} $|\HH_{S_\alpha}|=0$ and $\emptyset\not\in \GG^{S_\alpha}$, {\sc then} 
$\{\, \Mark(\alpha):=$\,{\FAIL};\ $t(\alpha) := S_\alpha\, \}$.
\item{\sc case 2.} If $|\HH_{S_\alpha}|=0$ and $\emptyset\in \GG^{S_\alpha}$,  {\sc then} $\{\, \Mark(\alpha):=$\,\DONE; $t(\alpha)=\emptyset\,\}$.
\item{\sc case 3.} If $\HH_{S_\alpha}=\{H\}$ and $\{\{i\}| i\in H\}\subseteq \GG^{S_\alpha}$, {\sc then} $\{\, \Mark(\alpha):=$\,\DONE; $t(\alpha)=\emptyset\,\}$.
\item{\sc case 4.} {\sc otherwise}, let $H$ denote the only hyperedge of 
$\HH_{S_\alpha}$, and set $\Mark(\alpha):=$\,{\FAIL}, and  $t(\alpha):= S_\alpha-\{i\}$ for some arbitrarily chosen $i\in H$ with $\{i\} \not\in \GG^{S_a}$.\\
\LIB{\hrule}
 \end{description}}
\LIB{\begin{description}
\item{\sc case 1.} {\sc if} $|\HH_{S_\alpha}|=0$ and $\emptyset\not\in \GG^{S_\alpha}$, {\sc then}\\ 
$\{\, \Mark(\alpha):=$\,{\FAIL};\ $t(\alpha) := S_\alpha\, \}$.
\item{\sc case 2.} If $|\HH_{S_\alpha}|=0$ and $\emptyset\in \GG^{S_\alpha}$,  {\sc then}\\ $\{\, \Mark(\alpha):=$\,\DONE; $t(\alpha)=\emptyset\,\}$.
\item{\sc case 3.} If $\HH_{S_\alpha}=\{H\}$ and $\{\{i\}| i\in H\}\subseteq \GG^{S_\alpha}$, {\sc then} $\{\, \Mark(\alpha):=$\,\DONE; $t(\alpha)=\emptyset\,\}$.
\item{\sc case 4.} {\sc otherwise}, let $H$ denote the only hyperedge of 
$\HH_{S_\alpha}$, and set $\Mark(\alpha):=$\,{\FAIL}, and  $t(\alpha):= S_\alpha-\{i\}$ for some arbitrarily chosen $i\in H$ with $\{i\} \not\in \GG^{S_a}$.\\
\hrule
\end{description}}
\UNLIB{\smallskip}
\UNLIB{\hrule}

\bigskip

Then, each leaf $\alpha$  of $\Lambda$ not yet marked is subjected to the following procedure:

 \bigskip
\hrule

\medskip
\noindent
{\sc procedure process($\alpha$)}: 
\begin{enumerate}
\item Let $I_\alpha$ consist of those vertices of $\HH_{S_\alpha}$ that occur in more than 
$|\HH_{S_\alpha}|/2$ hyperedges of $\HH_{S_\alpha}$;

\item {\sc if} $I_\alpha$ is a new transversal of $\GG^{S_\alpha}$ with respect to $\HH_{S_\alpha}$,
{\sc then} \\ $\{\, \Mark(\alpha):=$\,\FAIL; $t(\alpha):= I_\alpha$;  {\sc exit procedure}\};

\item {\sc otherwise if} there is a $G\in \GG^{S_\alpha}$ such that $G\cap I_\alpha=\emptyset$ {\sc then}
let 
$$\CC= \{S_\alpha-(E-\{i\})| E\in\GG^{S_\alpha}_G\, \mbox{\it and}\ i\in E\cap G\},$$
where  $\GG^{S_\alpha}_G=\GG^{S_\alpha}-\{E'\in  \GG^{S_\alpha}|\ E'\subseteq S_\alpha-G\}$;
\item {\sc otherwise if} there exists a $H\in \HH_{S_\alpha}$ such that $H\subseteq I_\alpha$
 {\sc then} let $$\CC = \{S_\alpha-\{i\} | i\in H \}\cup \{H\};$$
\item Let $\kappa(\alpha)=|\CC|$ and 
\LIB{assume $\CC=\{C_1, C_2,\ldots,C_{\kappa(\alpha)}\}.$}
\UNLIB{the elements of $\CC$ be $C_1, C_2,\ldots,C_{\kappa(\alpha)}$.} For each $C_i$, $1\leq i\leq \kappa(\alpha)$, create a new child  $\alpha_i$ with $\Label(\alpha_i)=(\Label(\alpha), i)$, \UNLIB{\\}
$S_{\alpha_i}=C_i$, $\Instance(\alpha_i)=(\GG^{S_{\alpha_i}},\HH_{S_{\alpha_i}})$, $\Mark(\alpha_i)=$\,\NIL, and 
$t(\alpha_i)=\emptyset$.
\nopagebreak
 \end{enumerate}
\nopagebreak
\hrule
\FULL{\pagebreak}
\medskip

\PODS{\bigskip}

Exhaustively apply the procedures {\sc marksmall} (to unmarked leaves  $\alpha$ having  $|\HH_{S_\alpha}|\leq 1$) and {\sc process} (to all other unmarked leaves), until there are no unmarked leaves left in the tree. The resulting tree is  then $T(\GG,\HH)$. 
\LIB{

}
 Note that, due to the possible multiple choices of 
$i$ in {\sc case 4} of the {\sc marksmall} procedure, and of 
$G$ in Step~3 and of $H$ in Step~4 of the {\sc process} procedure, the decomposition tree $T(\GG,\HH)$ is actually not uniquely defined. However, this is not a problem. To obtain a well-defined decomposition tree $T(\GG,\HH)$, we may resort to any pair of deterministic versions of {\sc marksmall} and of {\sc process}, for example, we may use the version of {\sc marksmall} where in {\sc case 4} the smallest $i\in H$ fulfilling $\{i\} \not\in \GG^{S_a}$ is chosen, and the version of {\sc process}
where 
in Step 3 the lexicographically first edge 
 $G\in \GG^{S_\alpha}$  with $G\cap I_\alpha=\emptyset$ is chosen, and similarly for $H$ in Step~4.

The following proposition  summarizes important results by Boros and Makino~\cite{BorosMakino09}.
\FULL{\smallskip}

\begin{proposition}[Boros and Makino~\cite{BorosMakino09}]\label{prop:bm} \hfill
\begin{enumerate}
\item $\HH =tr(\GG)$ iff all leaves of $T(\GG,\HH)$ are marked \DONE. 
\item The depth of $T(\GG,\HH)$ is bounded by  $\log |\HH|$. 
\item  Each node $\alpha$ of $T(\GG,\HH)$
has at most $|V|\cdot|\GG|$ children, i.e., $\kappa(\alpha)\leq |V|\cdot |\GG|$, where $V$ is the set of vertices of $\GG$ and $\HH$. 
\item If $\HH\neq tr(\GG)$, then $T(\GG,\HH)$ has at least one leaf labeled
{\sc fail}, and the set  $t(\alpha)$ associated to each leaf $\alpha$ labeled {\sc fail} is a  new transversal of $\GG$ \UNLIB{with respect to}\LIB{w.r.t.} $\HH$.
\end{enumerate}
\end{proposition}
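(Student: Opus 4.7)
The plan is to establish the four parts in the order (3), (2), (1), (4), since the branching and depth bounds are structural facts that can be read off the procedures, and they then shape the correctness arguments.

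For (3), the bound follows directly from the definition of $\CC$ in the two branching cases. In Step~3,
$|\CC| \leq \sum_{E \in \GG^{S_\alpha}_G} |E \cap G| \leq |\GG|\cdot|V|$,
while in Step~4, $|\CC| = |H|+1 \leq |V|+1 \leq |V|\cdot|\GG|$ for any nondegenerate instance. For (2), the key combinatorial heart of the method is to prove that at every branching, each child $\alpha_i$ satisfies $|\HH_{S_{\alpha_i}}| \leq |\HH_{S_\alpha}|/2$, which then yields the $\log|\HH|$ depth bound. In Step~4 this is immediate: if $C_i = S_\alpha - \{i\}$ for some $i \in H \subseteq I_\alpha$, then only edges of $\HH_{S_\alpha}$ not containing $i$ survive in $\HH_{S_{\alpha_i}}$, and the defining condition of $I_\alpha$ says that at most half of the edges of $\HH_{S_\alpha}$ avoid $i$; for the extra child $C = H$, simplicity of $\HH$ ensures $|\HH_{S_{\alpha_i}}|=1$. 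In Step~3, I would argue that any $H \in \HH_{S_\alpha}$ surviving in $\HH_{S_{\alpha_i}}$ for the child corresponding to the pair $(E,i)$ must lie in $S_\alpha - (E-\{i\})$; combined with the fact that $H$ is a transversal of $\GG$ and $G = E' \cap S_\alpha$ for some $E'$ forcing $H \cap E \neq \emptyset$, this pins down $H \cap E = \{i\}$, hence $i \in H$. Because $i \in G$ and $G \cap I_\alpha = \emptyset$, vertex $i$ lies in at most $|\HH_{S_\alpha}|/2$ edges of $\HH_{S_\alpha}$, giving the desired halving.

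For parts (1) and (4), the plan is to prove by induction on the depth of the subtree rooted at $\alpha$ that (a) every leaf below $\alpha$ is marked \DONE\ iff $\HH_{S_\alpha} = tr(\GG^{S_\alpha})$, and (b) any leaf $\beta$ below $\alpha$ marked \FAIL\ carries a $t(\beta)$ that is a new transversal of $\GG^{S_\alpha}$ with respect to $\HH_{S_\alpha}$. The base cases are the four branches of {\sc marksmall} and the \FAIL\ detection in Step~2 of {\sc process}; each is a short direct verification (e.g.\ in {\sc marksmall} Case~1, $\emptyset \notin \GG^{S_\alpha}$ makes $S_\alpha$ a transversal, and $\HH_{S_\alpha}=\emptyset$ leaves no $H \in \HH_{S_\alpha}$ to be contained in it). The inductive step rests on a decomposition lemma saying that $\HH_{S_\alpha} = tr(\GG^{S_\alpha})$ iff $\HH_{S_{\alpha_i}} = tr(\GG^{S_{\alpha_i}})$ for every child $\alpha_i$: in Step~3, any new transversal $T$ of $\GG^{S_\alpha}$ w.r.t.\ $\HH_{S_\alpha}$ must pick some $i \in T \cap G$ and avoid some edge of $\GG^{S_\alpha}_G$ outside $\{i\}$, placing $T$ inside one of the $C_i$'s; in Step~4, analogous reasoning via $H \subseteq I_\alpha$ applies. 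To finish (4), I observe that for any $T \subseteq S_\alpha$, being a new transversal of $\GG^{S_\alpha}$ w.r.t.\ $\HH_{S_\alpha}$ coincides with being a new transversal of $\GG$ w.r.t.\ $\HH$ (both "transversal" and "contains no hyperedge of $\HH$" localize to $S_\alpha$ when $T \subseteq S_\alpha$), so $t(\alpha)$ at any \FAIL\ leaf is automatically a new transversal at the root level.

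The principal obstacle is the halving step for Step~3 of (2): it depends on the interplay among the definition of $I_\alpha$ as a majority set, the choice of an edge $G$ of $\GG^{S_\alpha}$ disjoint from $I_\alpha$, the hypothesis $\HH \subseteq tr(\GG)$, and the precise form $C_i = S_\alpha - (E - \{i\})$ of the children. Once that halving is established, (3) is immediate, the depth bound in (2) follows by iteration, and the correctness claims (1) and (4) fall out of a routine but careful induction along the decomposition tree.
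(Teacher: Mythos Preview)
The paper does not prove this proposition at all: it is stated as a summary of results from Boros and Makino~\cite{BorosMakino09} and is used as a black box. So there is no ``paper's own proof'' to compare against; your write-up is in effect a reconstruction of the Boros--Makino argument.

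That said, your reconstruction is largely sound. The bounds in (3) and the halving argument for (2) are correct (modulo a slip in the Step~3 halving: you write ``$G = E' \cap S_\alpha$'' where you mean ``$E = E' \cap S_\alpha$'' --- it is the generic edge $E$, not the special edge $G$, that one lifts to $\GG$ to force $H' \cap E \neq \emptyset$). The localization observation for (4) is exactly right, and it makes the induction in your clause~(b) unnecessary: once a \FAIL\ leaf $\beta$ produces $t(\beta) \subseteq S_\beta$ that is a new transversal of $\GG^{S_\beta}$ w.r.t.\ $\HH_{S_\beta}$, your equivalence immediately promotes it to a new transversal of $\GG$ w.r.t.\ $\HH$.

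The one genuine gap is in the decomposition lemma for Step~3 in part~(1). You claim that \emph{any} new transversal $T$ of $\GG^{S_\alpha}$ w.r.t.\ $\HH_{S_\alpha}$ lies inside some $C_i$; that is false in general. What is true is that any new \emph{minimal} transversal $T$ does: pick $i \in T \cap G$ (nonempty since $T$ is a transversal), and by minimality there is a witness edge $E \in \GG^{S_\alpha}$ with $T \cap E = \{i\}$; since $i \in E \cap G$ we have $E \in \GG^{S_\alpha}_G$, and then $T \subseteq S_\alpha - (E - \{i\}) \in \CC$. Since the standing assumption $\HH \subseteq tr(\GG)$ guarantees $\HH_{S_\alpha} \subseteq tr(\GG^{S_\alpha})$, any new transversal contains a new minimal one, so passing to a minimal $T$ is harmless. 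Insert the word ``minimal'' and the minimality-witness step, and your argument goes through.
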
 

\section{{A Complexity-Theoretic Lemma}}
\label{sec:lemma}

For a space-constructible numerical function $z$,  $\DSPACE[z(n)]$ \, (resp.  $\FDSPACE[z(n)]$),\, denotes, as usual, the class of all all decision problems  (resp. computation problems)  solvable deterministically in $O(z(n))$ space. For a function $f$, let $f^1=f$ and for $i\geq 1$, let $f^{i+1}=f\circ f^i$, where $\circ$ is the usual function composition, i.e., where for each $x$ in the domain of $g$, $(f\circ g)(x)=f(g(x))$. 
Let $Q_{\log}$ denote the set of all functions $\rho$
from strings over some input alphabet to the non-negative natural numbers,  where for each input string $I$, $\rho(I)$ is $O(\log |I|)$ and such that $\rho$ is logspace-computable, i.e., $\rho(I)$ is computable in logarithmic space from $I$.
 For each function $f$, and for each function 
 $\rho\in Q_{\log}$, let  $f^\rho$ 
denote the function that to each input $I$ associate the output $f^\rho(I)=f^{\rho(I)}(I)$.  
If $\FC$ denotes a functional complexity class, then $[\![\FC]\!]^{\log}$ denotes the class of functions that can be built from some function $f$ in $\FC$ via a logarithmic number $\rho(I)=O(\log n)$ of self-compositions of $f$ for each input  $I$ of size $n$:
$$[\![\FC]\!]^{\log}=\bigcup_{f\in\mbox{\footnotesize \sc FC},\, \rho\in \mbox{\small Q$_{\log}$}} \hspace{-0.2cm}\{f^{{\rho}}\}.$$
For a functional complexity class $\FC$, the subclass $\FC_\pol$ is defined as the set of all functions $f$ of $\FC$ for which 
there exists a polynomial $\gamma$ such that for each input $I$, and for each $i\geq 1$, $|f^i(I)|\leq \gamma(|I|)$.
In general,  $\FC_\pol$ is a proper subclass of $\FC$. This is, in particular so 
for  $\FDSPACE[\log n]$, i.e., functional logspace, a.k.a. FLOGSPACE.  To see this, let $f$ be the 
function that associates to an input  of size $n$ an output consisting of $n^2$ zeros. 
Clearly,  $f\in \FDSPACE[\log n]$, 
but the output sizes of the $f^i$ 
are not bounded by any fixed polynomial when $i$ grows. 
Thus, $f\not\in[\![\FDSPACE[\log n]_\pol]\!]$, 
hence $[\![\FDSPACE[\log n]_\pol]\!]$ is a proper subclass of  the class \LIB{\\}$\FDSPACE[\log n]$.

\begin{lemma}
\label{Lemma1}
\LIB{$}$[\![\FDSPACE[\log n]_\pol]\!]^{\log}\subseteq\, \FDSPACE[\log^2n].$\LIB{$}
\end{lemma}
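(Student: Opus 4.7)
The plan is to exploit the classical ``compose-on-the-fly'' trick for logspace transducers, nested $O(\log n)$ times. Fix $g = f^\rho \in [\![\FDSPACE[\log n]_\pol]\!]^{\log}$, where $f$ is computed by a deterministic logspace transducer $M_f$, the iterates of $f$ are polynomially size-bounded by some polynomial $\gamma$ (this is the role of the $\pol$ subscript), and $\rho(I) \leq c\log|I|$ for some constant $c$, with $\rho$ itself logspace-computable. The standard one-step composition lemma for $\FL$ says that $f \circ f$ can be computed in logspace by simulating the outer copy of $M_f$ and, whenever it wants to read the $p$-th symbol of its input, invoking a fresh copy of $M_f$ to re-derive the $p$-th output symbol of $f(I)$. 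I will iterate this trick $\rho(I)$ levels deep.

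Concretely, I would define a recursive routine $\textsc{BitOf}(k, p, I)$ that returns the $p$-th output symbol of $f^k(I)$. Its base case $k = 0$ simply reads position $p$ of the actual input $I$. For $k \geq 1$, it launches a fresh simulation of $M_f$ with an empty input tape and an output counter initialized to $0$; whenever $M_f$ requests position $q$ of its input, $\textsc{BitOf}(k, p, I)$ recursively calls $\textsc{BitOf}(k-1, q, I)$ and feeds back the returned symbol; whenever $M_f$ emits an output symbol, the counter is incremented and, upon reaching $p$, that symbol is returned. The outer driver computes $k := \rho(I)$ once (in logspace), and then writes out $\textsc{BitOf}(k, 1, I), \textsc{BitOf}(k, 2, I), \ldots$ until $M_f$'s outermost simulation halts.

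The space analysis is where the bound really kicks in. Each activation frame stores only (i) the work tape of $M_f$, of size $O(\log|f^{k-1}(I)|)$; (ii) the input- and output-head positions on $M_f$'s tapes; and (iii) the arguments $k$ and $p$. Thanks to the $\pol$ restriction, $|f^j(I)| \leq \gamma(|I|)$ for every $j$, so every one of these quantities fits in $O(\log|I|)$ bits. Hence each frame occupies $O(\log|I|)$ space. The recursion stacks to depth at most $\rho(I) \leq c\log|I|$, and the frames are pushed and popped in the usual way, giving an overall budget of $O(\log|I|) \cdot O(\log|I|) = O(\log^2|I|)$, as required.

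The principal obstacle, and the sole reason I need the $\pol$ decoration in the hypothesis, is to guarantee that pointers into each intermediate $f^j(I)$ still fit in $O(\log|I|)$ bits; without this, even a single recursion frame could exceed the $O(\log n)$ budget and the nested construction would easily blow past $\log^2 n$ space. Everything else --- deterministic simulation of $M_f$, bookkeeping of the output counter, and computing $\rho$ to initialize $k$ --- is a routine logspace exercise.
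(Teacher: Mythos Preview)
Your proof is correct and follows essentially the same approach as the paper's: both use the standard ``recompute on demand'' trick for composing logspace transducers, nested $O(\log n)$ levels deep, with the $\pol$ restriction guaranteeing that intermediate results stay polynomially bounded so that all pointers and worktapes fit in $O(\log n)$ bits per level. The paper phrases the construction as an array of procedures $P_1,\ldots,P_{\rho(I)}$ each maintaining its own $O(\log n)$-size workspace and communicating via index/output registers, whereas you phrase it as a recursive routine $\textsc{BitOf}(k,p,I)$ with an explicit stack of $O(\log n)$-size frames; these are the same idea in different clothing.
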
 
\begin{proof} 
The proof is similar the well-known proof that for any two functions $f,g$ that are in the functional class $\FDSPACE[\log n]$,
their composition $g\circ f$  is in $\FDSPACE[\log n]$, too. (See, e.g.~\cite{papa-94}). However, here, the logarithmic (rather than constant)  number of compositions is responsible for the blowup of the required space by a logarithmic factor. 
Let $f$ be a function from strings to strings in $\FDSPACE[\log n]_\pol$, realized by a logspace Turing Machine $T$,  and let $\rho\in Q_{\log}$. 
In order to prove the lemma, it is sufficient to show that one can construct a single functional  Turing machine $T^*$
with space bound $O(\log^2 n)$ that simulates the pipelined application $T^{\rho(I)}$ that outputs $f^{\rho(I)}(I)$. \LIB{

}
$T^*$ first computes $\rho(I)$ in logspace and then simulates an arrangement  of $\rho(I)$ copies of $T$, say,  $T_1,T_2,\ldots,T_{\rho(I)}$, such that 
the input string $v_1$ to $T_1$ is $I$, and such that for $i\geq1$, the input string $v_{i+1}$ to $T_{i+1}$ is equal to the output string $w_i$ of $T_i$. Given that the size of $w_i=T^i(I)$ is bounded  by some fixed  polynomial $\gamma$, there are numbers $a$ and $b$ such that each $T_i$ requires no more than space $a+b\log n$. 
When simulating the pipelined computation $T_{\rho(I)}(T_{\rho(I)-1}(\cdots(T_2(T_1(I))))$ on a single Turing machine $T^*$, we have to avoid the effective storage of any intermediate output $w_i$ (or, equivalently, input $v_{i+1})$.
To this aim, $T^*$ simulates each $T_i$ via a logspace procedure $P_i$ that 
maintains its own space area  on the worktape of $T^*$. Each $P_i$
acts like $T_i$, except for the following modifications: For  $1<i<\rho(I)$, $P_i$ has a single output bit which is stored on the worktape of $T^*$;  moreover $P_i$  takes as input a dedicated special index register $d_i$ that specifies which output bit of $T_i$ is to be computed, and computes only this output bit  (suppressing all other output bits)  and stores it in a single-bit register $o_i$. $T_i$'s access to its $j$-th input bit is then simulated by 
$P_i$ writing "$j$" (in binary)  into the special index register $d_{i-1}$, starting $P_{i-1}$, and then waiting until $P_{i-1}$ writes  the desired output bit into $o_{i-1}$ which corresponds to the correct value of the $j$-th output of $T_{i-1}$, and thus the $j$-th input bit to $T_i$. $P_1$ and $P_{\rho(I)}$ work in a similar way, except that $P_1$ directly accesses the  input string $I$ from the input tape of $T^*$, and $P_{\rho(I)}$, rather than suppressing some output bits,  writes {\em all} output bits to the output tape of $T^*$. 

The workspace required by each procedure $P_i$ is easily seen to be bounded by $a' +b'\log n$ for some fixed constants $a'$ and $b'$ independent of $n$. This reflects the $a +b\log n$ bits required to execute $T_i$, plus  the little extra space $P_i$ may require for its index $d_i$, for the output bit $o_i$,  and for a constant number of auxiliary counters and pointers (of size at most $a+ b\log n$  bits each) for control and stack management for the $P_i$ procedures.  Given that $\rho(I)$ is $O(\log n)$, $T^*$ requires $O(\log^2 n)$ space in total.  
\end{proof}

\smallskip

Note that the same space bound doesn't  hold for the com\-ple\-xi\-ty class $[\![\FDSPACE[\log n]]\!]^{\log}$. In fact, with functions $f$ in this class, intermediate outputs $f^i(I)$ may be of superpolynomial size, and in the worst case, even of exponential size  $n^{\Theta(n)}$. Therefore, 
when omitting the ``pol" restriction, 
\UNLIB{the best space bound we are able to show is  $[\![\FDSPACE[\log n]\,]\!]^{\log} \subseteq \mbox{\sc FPSPACE}.$}\LIB{$[\![\FDSPACE[\log n]\,]\!]^{\log} \subseteq \mbox{\sc FPSPACE}$ is the best space upper bound we are able to show.} Since an 
$\FDSPACE[\log^2 n]$ Turing machine has an output of size at most $n^{O(\log n)}$, it \UNLIB{is actually the case}\LIB{it actually holds} that  $[\![\FDSPACE[\log n]]\!]^{\log}$ $ \not\subseteq$  $\FDSPACE[\log^2 n]$.

\FULL{\section{The New Space Bound for \bfseries{\scshape{Dual}}}}
\PODS{{\section{{The New Space Bound}}}}
\label{sec:logsquare}
The main result of this section is  that  for a pair $(\GG,\HH)$, the entire decomposition tree $T(\GG,\HH)$ (with all markings and labels) produced by the decomposition method of Boros and Makino as outlined in Section~\ref{sec:bm} can be computed with quadratic logspace. The other space-complexity results follow from this as simple corollaries.

We start with a lemma that provides us with a logarithmic space bound for computing the $i$-th child of a node $\alpha$ of the decomposition tree from the fully labeled node $\alpha$ and from the set $V$ of vertices of the original input instance, or for discovering that such a child does not exist. 
If $\alpha$ is a node of the decomposition tree, let us denote by $\Attr(\alpha)$ the attributes of $\alpha$, i.e., the tuple $(\Label(\alpha),S_\alpha, \Instance(\alpha),\Mark(\alpha),t(\alpha))$. 

\begin{lemma} 
\label{lem:next} 
There \FULL{is} \PODS{exists} a deterministic logspace  procedure \UNLIB{\PODS{\\}} {\sc next}$(V,\Attr(\alpha),i)$, which for each {\DUAL}  instance  $(\GG,\HH)$ over vertex set $V$, for each 
attribute set $\Attr(\alpha)$ of a node 
$\alpha$ of $T(\GG,\HH)$, 
and for each positive integer $i\leq |V|\cdot|\GG|$ outputs: 
\begin{itemize}
\item $\Attr(\alpha_i)$ if  $\alpha_i$ is the $i$-th child of $\alpha$ in $T(\GG,\HH)$;
\item {\sc impossible} otherwise (i.e., if $\alpha$ has less than $i$ children).
\end{itemize}  
\end{lemma}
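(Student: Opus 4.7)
The plan is to show that each step of the procedure {\sc process} from Section~\ref{sec:bm} can be carried out in $O(\log n)$ workspace, provided that $\Attr(\alpha)$ and $V$ are read directly from the input tape. Because $\Instance(\alpha)=(\GG^{S_\alpha},\HH_{S_\alpha})$ and $S_\alpha$ are explicitly supplied inside $\Attr(\alpha)$, the procedure {\sc next} never has to reconstruct any data belonging to an ancestor of $\alpha$; it works purely locally on the hyperedges and vertices handed to it.

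First I treat $I_\alpha$ as an on-the-fly membership predicate rather than a stored set: to decide whether $v\in I_\alpha$, I scan $\HH_{S_\alpha}$ once, count the hyperedges containing $v$, and compare this count with $|\HH_{S_\alpha}|/2$. Both numbers fit in $O(\log n)$ bits. Using this predicate, I decide whether $I_\alpha$ is a new transversal of $\GG^{S_\alpha}$ with respect to $\HH_{S_\alpha}$ by two nested scans: every $E\in\GG^{S_\alpha}$ must contain some vertex $v$ with $v\in I_\alpha$, and no $H\in\HH_{S_\alpha}$ may satisfy $H\subseteq I_\alpha$. If both conditions hold, {\sc process} marks $\alpha$ as {\sc fail} and creates no children, so {\sc next} outputs {\sc impossible}.

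Otherwise exactly one of Steps~3 and~4 of {\sc process} applies, and I determine which by two further logspace scans: first the lex-first $G\in\GG^{S_\alpha}$ with $G\cap I_\alpha=\emptyset$ (Case~A of Step~3), and if none exists, the lex-first $H\in\HH_{S_\alpha}$ with $H\subseteq I_\alpha$ (Case~B of Step~4). To single out $C_i$, I enumerate $\CC$ in a fixed lexicographic order while incrementing an $O(\log n)$-bit counter until it reaches $i$. In Case~A, I iterate over the pairs $(E,j)$ with $E\in\GG^{S_\alpha}_G$ and $j\in E\cap G$, where membership $E\in\GG^{S_\alpha}_G$ reduces to the one-pass test $E\not\subseteq S_\alpha-G$; the output is then $C_i=S_\alpha-(E-\{j\})$. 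In Case~B, the enumeration traverses $\{S_\alpha-\{j\}:j\in H\}$ followed by the distinguished element $H$. Proposition~\ref{prop:bm}(3) guarantees $\kappa(\alpha)\le |V|\cdot|\GG|$, so $i$ fits in $O(\log n)$ bits; if the counter finishes before reaching $i$, {\sc next} outputs {\sc impossible}.

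Once $C_i$ has been identified, $\Attr(\alpha_i)$ is emitted by streaming: I append $i$ to $\Label(\alpha)$, write $S_{\alpha_i}=C_i$, emit $\GG^{S_{\alpha_i}}=\{E\cap C_i:E\in\GG^{S_\alpha}\}$ and $\HH_{S_{\alpha_i}}=\{E\in\HH_{S_\alpha}:E\subseteq C_i\}$ in a single pass each, and set $\Mark(\alpha_i)=\NIL$ and $t(\alpha_i)=\emptyset$. I do not foresee a serious obstacle: every subtask is a bounded-counter computation over data present verbatim in the input. The only point requiring some care is to fix one consistent lexicographic ordering on hyperedges throughout, so that the $i$-th child produced by {\sc next} agrees with the $i$-th child of $\alpha$ under the deterministic version of {\sc marksmall}/{\sc process} adopted at the end of Section~\ref{sec:bm}.
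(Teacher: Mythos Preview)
Your approach is essentially the paper's: argue that the operations inside {\sc process} are logspace (cardinality checks, set intersections, counting), and restrict the output to the $i$-th child only. The paper does this at a higher level of abstraction, simply asserting that {\sc marksmall} and {\sc process} are logspace transducers and then defining {\sc next} as the composition $\mbox{\sc marksmall}^*(\mbox{\sc process}^*(\alpha))$, preceded by a check whether $\alpha$ is already marked \DONE\ or \FAIL. Your detailed unwinding of the on-the-fly computation of $I_\alpha$, the selection of $G$ or $H$, and the enumeration of $\CC$ is a faithful expansion of what the paper leaves implicit.

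There is, however, a small but genuine omission. You always emit $\Mark(\alpha_i)=\NIL$ and $t(\alpha_i)=\emptyset$, i.e., you implement only {\sc process}$^*$. But in $T(\GG,\HH)$ a freshly created child with $|\HH_{S_{\alpha_i}}|\le 1$ is immediately processed by {\sc marksmall} and becomes a leaf marked \DONE\ or \FAIL, possibly with a nonempty $t(\alpha_i)$. The paper handles this by composing {\sc marksmall}$^*$ after {\sc process}$^*$; you need the analogous final pass, otherwise the $\Attr(\alpha_i)$ you output does not match the node's attributes in $T(\GG,\HH)$, and the downstream uses in Lemma~\ref{lem:pathnode} (detecting a \FAIL\ leaf) would break. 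Similarly, you should first test whether $\Mark(\alpha)\in\{\DONE,\FAIL\}$ and output {\sc impossible} in that case, before running any of the {\sc process} logic. Both fixes are trivially logspace, so the overall argument stands once you add them.
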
 
\begin{proof}
First note that by simple inspection it is immediate that the procedures {\sc marksmall} and {\sc  process} given in Section~\ref{sec:bm} can be implemented by deterministic logspace transducers. In fact, these procedures only perform 
a fixed composition of 
simple cardinality checks, counting, assignments, and set theoretic operations that are all well-known to run in logspace. 

A procedure {\sc next}, as required, can be constructed as follows. If $\Label(\alpha)\in \{\mbox{\DONE, \FAIL}\}$ then output {\sc impossible},  \UNLIB{otherwise}\LIB{else} perform $\mbox{\sc marksmall}\!^*(\mbox{\sc process}^*(\alpha))$, where:  
\LIB{\vspace{-3mm}}
\begin{itemize}
\item {\sc process$^*$} works like {\sc process} except that it outputs only the $i$-th child of $\alpha$, if such a child exists,  rather than outputting all children, and output {\sc impossible} otherwise; and 
\item {\sc marksmall\!$^*$} works like {\sc marksmall}, except that it also accepts the input 
{\sc impossible}, in which case it also outputs {\sc impossible}.
\end{itemize}
These minor modifications of {\sc marksmall} and {\sc  process} clearly run in deterministic logspace, therefore \UNLIB{also} their com\-position does, and hence so does the procedure {\sc next}.\end{proof}

\medskip

A {\em path descriptor} for a {\DUAL} instance $I=(\GG,\HH)$ over a vertex set $V$ is a list of length $\leq \lfloor\log |\HH|\rfloor$, whose  elements are  integers bounded by $|V|\cdot|\GG|$. 
The set of all path descriptors for $I$ is denoted by $\PD(I)$. Clearly, $\PD(I)\subset\aleph_\HH$, and each label $\Label(\alpha)$ of a node $\alpha$ of $T(\GG,\HH)$ is contained in $\PD(I)$. Intuitively, a path descriptor, exactly in the same way as a label, is intended to describe a sequence of child-indices, that, starting from the roof of $T(\GG,\HH)$ lead to a specific node $\alpha$ of $T(\GG,\HH)$. The root of $T(\GG,\HH)$ is identified by the empty path descriptor. If $\pi=(i_1,i_2,\ldots,i_r)$ is a path descriptor, then $\Head(\pi)=i_1$ and $\Tail(\pi)$ is the path descriptor $(i_2,\ldots,i_r)$. Two path descriptors of the form 
$(i_1,\ldots,i_r)$ and $(i_1,\ldots,i_r,i_{r+1})$ are said to be {\em consecutive}. 

\smallskip
\begin{lemma}
\label{lem:pathnode}{}
\LIB{\phantom{x}\hspace{3mm}}There is a procedure {\sc pathnode$(I,\pi)$} that \LIB{\\}runs in deterministic space $O(\log^2(|I|))$, 
that for each {\DUAL} input instance $I$ and path descriptor $\pi\in \PD(I)$ outputs
$\Attr(\alpha)$ if $\pi$ corresponds to the label $\Label(\alpha)$ of a node $\alpha$ in $T(\GG,\HH)$,
and outputs {\sc wrongpath} otherwise.
\end{lemma}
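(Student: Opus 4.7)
The plan is to implement {\sc pathnode} as an iterated self-composition of the logspace procedure {\sc next} from Lemma~\ref{lem:next} and then invoke Lemma~\ref{Lemma1} to obtain the quadratic logspace bound. Concretely, I would encode the state of a root-to-$\alpha$ path traversal as a tuple $\sigma=(I,\pi,k,A)$, where $I$ is the original instance (with its vertex set $V$), $\pi$ is the path descriptor, $k$ is a counter of steps already taken, and $A$ is either the attribute set $\Attr(\alpha_k)$ of the node currently reached or a distinguished marker $\mbox{\sc impossible}$. I would then define a single function $f$ as follows: on the raw input $(I,\pi)$ it outputs the initial state $(I,\pi,0,\Attr(\alpha_0))$, where $\Attr(\alpha_0)=((),V,(\GG,\HH),\NIL,\emptyset)$; on input $(I,\pi,k,A)$ with $k<|\pi|$ and $A\neq\mbox{\sc impossible}$ it outputs $(I,\pi,k+1,\mbox{\sc next}(V,A,i_{k+1}))$, where $i_{k+1}$ is the $(k+1)$-th component of $\pi$; and on every other tuple it simply echoes its input, thereby both propagating the $\mbox{\sc impossible}$ marker and leaving the already-computed final attribute untouched.

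Since {\sc next} is a deterministic logspace transducer (Lemma~\ref{lem:next}) and the remaining bookkeeping (incrementing $k$, extracting $i_{k+1}$, testing the marker, constructing the initial state) is manifestly logspace, $f\in\FDSPACE[\log n]$. Moreover every component of $\sigma$ has size polynomial in $|I|$---the hypergraphs $\GG^{S_\alpha}$ and $\HH_{S_\alpha}$ in $\Attr(\alpha)$ are bounded in size by $|\GG|+|\HH|$, while $\Label(\alpha)$, $S_\alpha$, $t(\alpha)$, $\pi$, and $k$ each fit in a polynomial number of bits---so the output of $f$ is uniformly polynomially bounded, placing $f$ in $\FDSPACE[\log n]_\pol$. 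Setting $\rho(I,\pi)=|\pi|+1\leq\lfloor\log|\HH|\rfloor+1$, which is $O(\log|I|)$ and trivially logspace-computable, we have $\rho\in Q_{\log}$, and applying $f$ exactly $\rho$ times starting from $(I,\pi)$ yields a tuple whose fourth component is $\Attr(\alpha)$ when $\pi$ labels a genuine node $\alpha$ of $T(\GG,\HH)$ and is $\mbox{\sc impossible}$ otherwise. Hence $f^\rho\in[\![\FDSPACE[\log n]_\pol]\!]^{\log}$, which by Lemma~\ref{Lemma1} sits inside $\FDSPACE[\log^2 n]$; since the output of $f^\rho$ has polynomial size, a final logspace post-processor that extracts this fourth component and emits either $\Attr(\alpha)$ or the keyword {\sc wrongpath} preserves the $O(\log^2|I|)$ bound.

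The step I would expect to be the main obstacle is verifying the polynomial boundedness of all intermediate tuples, which is exactly what is needed to place $f$ in the subclass $\FDSPACE[\log n]_\pol$ rather than in the broader class $\FDSPACE[\log n]$. As the remark following Lemma~\ref{Lemma1} emphasizes, without polynomial boundedness the conclusion would fail, since iterated outputs of a general logspace function can in principle reach size $n^{\Theta(n)}$. Everything else---packaging the state $\sigma$ onto a single tape, handling the edge cases of $f$, and composing the pieces---is routine.
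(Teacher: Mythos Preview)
Your proposal is correct and follows essentially the same approach as the paper: encode the current position in the root-to-node traversal as a state, define a single logspace step function that advances via {\sc next}, verify that all iterates stay polynomially bounded so that the function lies in $\FDSPACE[\log n]_\pol$, iterate $O(\log n)$ times, and invoke Lemma~\ref{Lemma1}. The only differences from the paper's argument are cosmetic---you carry a counter $k$ and the full instance $I$ in the state and fold the initialization into $f$, whereas the paper consumes $\pi$ via $\Tail$, carries only $V$, and sets up the root's attributes separately before iterating---none of which affects the argument.
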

 
\begin{proof}
Let $I=(\GG,\HH)$, $V=V(\GG)$, and $\pi\in\PD(I)$, and let $\ell(\pi)$ denote the length of the sequence $\pi$  (recall that ($\ell(\pi)\leq \log |I|$). The procedure {\sc pathnode} first computes in deterministic  logspace $\Attr(\alpha_0)$ for the root $\alpha_0$  of $T(\GG,\HH)$. It then computes 
$f^{\ell(\pi)}(V, attr(\alpha_0), \pi)$, 
where $f$ is the function corresponding to the procedure $F$ described as follows. $F$ accepts as input either the string {\sc wrongpath}, or a triple $(W, \Attr, \gamma)$ where $W$ is a set, $\Attr$ is a data structure of the same format as the attributes $\Attr(\beta)$ of some vertex $\beta$ in a decomposition tree, and $\gamma$ is  a sequence of positive integers.  On all other inputs, $F$  outputs the empty string.
On input {\sc wrongpath}, 
$F$ outputs {\sc wrongpath}; otherwise $F$ computes $F'(\mbox{\sc next}(W, \Attr, head(\gamma))$, where {\sc next} be as specified in Lemma~\ref{lem:next}, and where $F'$ is as follows. $F'$ outputs {\sc wrongpath} if  
$\mbox{\sc next}(W, \Attr, head(\gamma))=\mbox{\sc impossible}$,   and  $F'$ outputs 
$(W,Attr',\Tail(\gamma))$, whenever
\LIB{$}$\mbox{\sc next}(W, \Attr, head(\gamma))=Attr'$\LIB{$} for some attribute description $Attr'$. 
Since {\sc next} runs in deterministic logspace, also $F'$ and $F$ do, and therefore 
$f$ is a logspace computable function.

By construction and by Lemma~\ref{lem:next}, {\sc pathnode} precisely computes the attributes $\Attr(\alpha)$ if there is a node $\alpha$ with $\Label(\alpha)=\pi$ in $T(\GG,\HH)$, whereas otherwise 
{\sc pathnode} outputs {\sc wrongpath}.  
Since the function $\ell$ (expressing the length ${\ell(\pi)}$) is clearly in $Q_{\log}$, 
and since \UNLIB{for each $i$,  $f^{i}(V, attr(\alpha_0), \pi)$}\LIB{$f^{i}(V, attr(\alpha_0), \pi)$, for each $i$,}
is of size polynomially bounded in {the input size $|(V, attr(\alpha_0), \pi)|$,}  
$f^{\ell(\pi)}(V, attr(\alpha_0), \pi)$ can be computed by a procedure in $[\![\FDSPACE[\log n]_\pol]\!]^{\log}$,  and therefore, by Lemma~\ref{Lemma1},  in deterministic space 
$O(\log^2 n)$. The same complexity bounds obviously hold for  {\sc pathnode}. 
\end{proof}

By using a procedure {\sc pathnode} according to the above Lemma, we are now ready to formulate an algorithm 
{\sc decompose}
that computes the decomposition tree $T(\GG,\HH)$ 
to a {\DUAL} instance  $(\GG,\HH)$. In particular, the algorithms first lists the vertices and then the edges of the tree 
$T(\GG,\HH)$.
\medskip

\FULL{\begin{quote}}
\noindent{\bf Algorithm} {\sc  decompose:}\\  
\noindent Input: {\DUAL}-instance  $I=(\GG,\HH)$; \ \  Output: $T(\GG,\HH)$.\\
\noindent {\sc begin}\\
\noindent{\sc output}("Vertices:");\\
\noindent {\sc for} each path descriptor $\pi\in\PD(I)$  {\sc do}\\
\FULL{{\sc if} {\sc pathnode}$(I,\pi)\neq${\sc wrongpath} {\sc then} {\sc output}(\,{\sc pathnode}$(I,\pi)$\,);\\}
\PODS{{\sc if} {\sc pathnode}$(I,\pi)\neq${\sc wrongpath} \\ 
\phantom{.}\noindent\hspace{0.5cm}
{\sc then} {\sc output}(\,{\sc pathnode}$(I,\pi)$\,);\\}
\noindent{\sc output}("Edges:");\\
\noindent {\sc for} each pair $(\pi,\pi')$ of consecutive path descriptors in $\PD(I)$ {\sc do}\\
\phantom{.}\noindent\hspace{0.5cm} {\sc begin}\\
\phantom{.}\noindent\hspace{0.5cm} $\alpha:=\mbox{\sc pathnode}(I,\pi)$;\\
\phantom{.}\noindent\hspace{0.5cm} $\alpha':=\mbox{\sc pathnode}(I,\pi')$;\\
\phantom{.}\noindent\hspace{0.5cm} {\sc if} $\alpha'\neq${\sc wrongpath} {\sc then} \LIB{\\\phantom{.}\noindent\hspace{1.0cm}}
{\sc output}(\,$ \langle label(\alpha),label(\alpha')\rangle$\,);\\
\phantom{.}\noindent\hspace{0.5cm} {\sc end}\\
 \noindent {\sc end.}
\FULL{\end{quote}}

\begin{theorem}
The Algorithm {\sc  decompose} computes the decomposition tree $T(\GG,\HH)$ to a {\DUAL} instance 
$(\GG,\HH)$ in space $O(\log^2 n)$.
\end{theorem}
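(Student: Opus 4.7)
The plan is to verify two things: (i) that Algorithm {\sc decompose} correctly outputs the set of labeled vertices and the set of edges of $T(\GG,\HH)$, and (ii) that the whole procedure fits into $O(\log^2 n)$ work space, where $n=|I|$.

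For correctness, I would observe that by construction every node $\alpha$ of $T(\GG,\HH)$ carries a unique label $\Label(\alpha)\in\aleph_\HH$ that, by definition of the labeling scheme in Section~\ref{sec:bm}, is a path descriptor belonging to $\PD(I)$ (its entries are child indices, each bounded by $|V|\cdot|\GG|$ by Proposition~\ref{prop:bm}(3), and its length is bounded by $\log|\HH|$ by Proposition~\ref{prop:bm}(2)). Conversely, by Lemma~\ref{lem:pathnode}, {\sc pathnode}$(I,\pi)$ returns $\Attr(\alpha)$ exactly when $\pi=\Label(\alpha)$ for some node $\alpha$ of $T(\GG,\HH)$, and returns {\sc wrongpath} otherwise. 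Hence the first loop enumerates precisely the labeled vertices of $T(\GG,\HH)$. For the edges, note that parent/child pairs in $T(\GG,\HH)$ are exactly pairs of consecutive path descriptors $(\pi,\pi')$ both corresponding to real nodes; so the second loop enumerates precisely the edges.

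For the space bound, the key is that {\em iterating} path descriptors takes only $O(\log^2 n)$ space. A path descriptor has at most $\lfloor\log|\HH|\rfloor = O(\log n)$ entries, each an integer in $\{1,\ldots,|V|\cdot|\GG|\}$ encodable in $O(\log n)$ bits, so storing a single $\pi$ costs $O(\log^2 n)$ bits; the next descriptor in lexicographic order can be produced in place by a standard logspace counter. The same holds for enumerating {\em pairs} of consecutive descriptors: store $\pi$ and keep $\pi'$ implicit as $\pi$ extended by an index $j$ that ranges over $\{1,\ldots,|V|\cdot|\GG|\}$. Each call to {\sc pathnode}$(I,\pi)$ uses, by Lemma~\ref{lem:pathnode}, at most $O(\log^2 n)$ work space, and this work tape is reused across calls. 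The output tape does not count towards the space bound, so writing the attributes (which are polynomially large) is free. Adding up the three reusable blocks of work space (current $\pi$, current $j$, workspace of {\sc pathnode}) yields the claimed $O(\log^2 n)$ bound.

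I do not see a serious obstacle: the heavy lifting has been done in Lemma~\ref{lem:next} and Lemma~\ref{lem:pathnode}, whose proofs already invoked Lemma~\ref{Lemma1} to pay the $\log^2$ cost of iterated self-composition along a root-to-node path. What remains to check carefully is only the bookkeeping on the enumeration loops, namely that successive path descriptors (and successive consecutive pairs) can be produced in place on a $O(\log^2 n)$ tape and that the workspace of {\sc pathnode} can be safely reclaimed between calls; both are routine.
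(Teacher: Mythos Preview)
Your proposal is correct and follows essentially the same approach as the paper's own proof: correctness is deferred to Lemma~\ref{lem:pathnode}, and the space bound is obtained from the $O(\log^2 n)$ size of a path descriptor together with the $O(\log^2 n)$ space bound for {\sc pathnode}, reusing workspace across iterations. Your write-up is somewhat more explicit---spelling out why every label is a path descriptor via Proposition~\ref{prop:bm}, how to enumerate consecutive pairs in place, and that the write-only output tape does not count toward the space bound---but none of this departs from the paper's argument.
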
 
\begin{proof}
The correctness of the algorithm follows from the correctness of {\sc pathnode} as shown in Lemma~\ref{lem:pathnode}. For the space bound, note that each each path descriptor requires only $O(\log^2  |I|)$ $=$ $O(\log^2 n)$ bits, and that we can thus iterate (by re-using work-space) over all path descriptors and pairs of path descriptors in $O(\log^2 n)$ space. Given that, by Lemma~\ref{lem:pathnode}, {\sc pathnode} also runs in $O(\log^2 n)$ space, the
entire {\sc decompose} algorithm needs only $O(\log^2 n)$ space. 
\end{proof}

\vspace{-0.7cm}

\begin{corollary}\phantom{.}\hfill\\
\FULL{\vspace{-0.7cm}}
\PODS{\vspace{-0.4cm}}
\begin{enumerate}
\item Deciding {\DUAL} is in \mbox{\em DSPACE[$\log^2 n$]}.
\item If $tr(\GG)\neq \HH$, then computing a new transversal of $\GG$ w.r.t.  $\HH$ is in 
\mbox{\em FDSPACE[$\log^2 n$]}.
\end{enumerate}
\end{corollary}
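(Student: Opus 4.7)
The plan is to derive both statements as immediate refinements of the preceding Theorem, by replacing the explicit output of the full tree $T(\GG,\HH)$ with a simple scan for \FAIL-marked leaves. The key observation is that Proposition~\ref{prop:bm}(1) reduces deciding {\DUAL} to the question ``is every leaf of $T(\GG,\HH)$ marked \DONE?'', and Proposition~\ref{prop:bm}(4) lets us read a witness transversal directly off any \FAIL-marked leaf when such a leaf exists. So the whole tree need never be materialized; it is enough to visit one node at a time.

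For part (1), I would implement the following decision procedure: enumerate all path descriptors $\pi\in\PD(I)$ in some canonical (e.g.\ lexicographic) order using $O(\log^2 n)$ bits of work-tape (since each descriptor has length $O(\log n)$ and each component needs $O(\log n)$ bits). For each $\pi$, call {\sc pathnode}$(I,\pi)$ and inspect its output; if this is not {\sc wrongpath} and the returned $\Mark$-field equals \FAIL, reject immediately; otherwise continue. If the enumeration terminates without ever finding a \FAIL\ leaf, accept. By Proposition~\ref{prop:bm}(1) this procedure accepts iff $\HH=tr(\GG)$. By Lemma~\ref{lem:pathnode} each call to {\sc pathnode} runs in $O(\log^2 n)$ space, and the workspace used for the enumeration is reusable across calls, so the overall space bound is $O(\log^2 n)$, placing {\DUAL} in $\dspace{\log^2 n}$.

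For part (2), I would run the same enumeration, but instead of rejecting on the first \FAIL-leaf, output its associated set $t(\alpha)$, which is part of the attribute tuple $\Attr(\alpha)$ returned by {\sc pathnode}. By Proposition~\ref{prop:bm}(4), whenever $\HH\neq tr(\GG)$ there is at least one such leaf, and the associated set $t(\alpha)$ is a new transversal of $\GG$ with respect to $\HH$. The output can be streamed bit-by-bit as it is produced by {\sc pathnode}, so only work-tape (not output-tape) counts toward the $O(\log^2 n)$ bound, yielding membership in $\FDSPACE[\log^2 n]$.

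There is no real obstacle: the work was already done in proving Lemma~\ref{lem:pathnode} and the preceding Theorem. The only subtlety worth spelling out is that we must avoid materializing $\Attr(\alpha)$ fully when all we need is the $\Mark$ component (for part 1) or the $t$ component (for part 2); but since {\sc pathnode} is itself a streaming logspace-squared transducer, one may pipe its output through a projection that retains only the relevant field, preserving the space bound.
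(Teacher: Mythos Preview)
Your proposal is correct and matches the paper's own argument. The paper's proof first phrases it as ``compute the full tree in $\FDSPACE[\log^2 n]$, then post-process in (F)LOGSPACE'' and appeals to the composition facts $\FDSPACE[\log^2 n]\circ\mathrm{(F)LOGSPACE}=(\mathrm{F})\DSPACE[\log^2 n]$, but it then explicitly notes that one can equivalently solve both parts by slight modifications of {\sc decompose}---which is exactly the direct enumeration-and-scan procedure you wrote out, including the point about projecting out only the needed field of $\Attr(\alpha)$.
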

\begin{proof}
In both cases, it is possible to first compute the entire decomposition tree $T(\GG,\HH)$ in  FDSPACE[$\log^2 n$], and then (i) 
for problem 1 check by a DLOGSPACE procedure whether all leaves are marked {\DONE}, and (ii) for problem 2, use an
FLOGSPACE procedure to find a node $\alpha$ marked {\FAIL} in $T(\GG,\HH)$ and output its component $t(\alpha)$. 
Let $\circ$ denote the composition operator for complexity classes in the obvious sense. 
Given that \FULL{FDSPACE[$\log^2 n$]$\circ$DLOGSPACE =DSPACE[$\log^2 n$],}\PODS{$$\mbox{FDSPACE[$\log^2 n$]$\circ$DLOGSPACE =DSPACE[$\log^2 n$],}$$}
 and given that, moreover, 
\FULL{FDSPACE[$\log^2 n$]$\circ$FLOGSPACE =FDSPACE[$\log^2 n$],} 
\PODS{$$\mbox{FDSPACE[$\log^2 n$]$\circ$FLOGSPACE =FDSPACE[$\log^2 n$],}$$}
the complexity bounds follow. 
Alternatively, we can solve the problems 1 and 2 directly by respective slight modifications of 
{\sc decompose}. \nop{For 1 the modified procedure  would just suppress all outputs and terminate in an accepting state 
if no node is marked {\FAIL}. For 2, the modified procedure would also  suppress all outputs, except if it encounters the first node
$\alpha$ marked {\FAIL}, for which which it would output $t(\alpha)$ and then stop.}  
\end{proof} 

\smallskip

Note that if $tr(\GG)\neq\HH$, the witness $t(\alpha)$ produced is not necessarily a {\em minimal} transversal of $\GG$, but is, in general, just a transversal of $\GG$ that contains no edge of $\HH$ and thus witnesses that $tr(\GG)\neq\HH$, because $t(\alpha)$ must {\em contain} a missing minimal transversal of $\GG$. From $t(\alpha)$, such a minimal transversal $t$ can easily be computed in  polynomial time by letting first $t:=t(\alpha)$ and by then successively eliminating vertices $v$  from $t$ for which $t-\{v\}$ is still a transversal of $\GG$. However this process requires linear space  in the vertex set $V$ to remember the eliminated vertices plus logarithmic space in the instance size $|(\GG,\HH)|$ for checking. This is still better than polynomial space in the full instance size, but is not quite in quadratic logspace. It is currently not clear whether there exists a smarter algorithm that requires quadratic logspace only. 

\FULL{\section{Tightening the  Complexity Bound}}
\PODS{\section{{A Tighter Bound for {\sc\bf{{\LARGE D}UAL\ }}}}}
By the results of the previous section, {\DUAL} and its complement $\coDUAL$ are in quadratic logspace, i.e., in \LIB{the class}  DSPACE[$\log^2 n$]. 
On the other hand, as already mentioned, the complement of  
{\DUAL} is in $\beta_2$P,  the class of problems solvable in polynomial time with 
$O(\log^2 n)$ nondeterministic guesses. $\beta_2$P\UNLIB{This class} is identical with the complexity class $\GC(\log^2 n,{\rm PTIME})$ of the so called {\em Guess and Check} model for limited nondeterminism~\cite{CaiChen93,Goldsmith}, where $O(\log^2 n)$ nondeterministic bits are guessed  and are appended to the input before the proper PTIME computation starts. The Guess and Check classes are, more generally, defined as follows. Let  $C$ be a complexity class and $s$ a numerical function.  Then $\GC(s(n),C)$ is the class 
of all languages $L$ for which there exists a language $A\in C$ such that an input string $I$ is in $L$ iff there is a string $J$ of $O(s(|I|))$ bits, such that $(I,J)$ is in $A$. In other words, $L$ is in $\GC(s(n),C)$ iff the membership of a string  $I$ in $L$ can be checked in $C$ after having guessed  $O(s(n))$ nondeterministic bits that can be used as an additional input.

Given that PTIME  is believed to be  incomparable with DSPACE[$\log^2 n$] (cf~\cite{ john-90}), 
and given that, obviously, PTIME $\subseteq\GC(\log^2 n,{\rm PTIME})$, it is very likely that also 
 $\GC(\log^2 n,{\rm PTIME})$ and DSPACE[$\log^2 n$] are incomparable. Since {\DUAL} belongs to both classes, this suggests that neither well  characterizes  {\DUAL}, and that {\DUAL} is unlikely to be complete for either.  This observation incited us to look out for a complexity class containing {\DUAL} that would be contained in both classes $\GC(\log^2(n),{\rm PTIME})$ and DSPACE[$\log^2 n$], that wou\-ld thus constitute a 
tighter upper complexity bound for {\DUAL} than all those we have seen so far. In this section, we present precisely such a complexity class.  In order to describe this class, we state some definitions and prove a lemma.

\smallskip

The class  $\SUPER$ is defined as the composition of  \UNLIB{\PODS{the class \\}} $[\![\FDSPACE[\log n]_\pol]\!]^{\log}$ with LOGSPACE. \UNLIB{(where LOGSPACE stands for \DSPACE[$\log n$])} Formally, $\SUPER$ is equal to
$$[\![\FDSPACE[\log n]_\pol]\!]^{\log}\circ \mbox{\rm LOGSPACE}.$$
Here an input $I$ is first transformed to an output $O$
by a \FULL{$[\![\FDSPACE[\log n]_\pol]\!]^{\log}$ procedure,} 
\PODS{functional procedure that runs in deterministic space $[\![\FDSPACE[\log n]_\pol]\!]^{\log}$,}
after which $O$ is submitted to 
a LOGSPACE decision procedure which will decide based on $O$ if the original input $I$ 
is accepted or rejected.

{Note that $\SUPER$ is by all means a complexity class defined in terms of machines and resource bounds. In addition to the classical resources such as the amount of workspace, we here involve somewhat more unusual resources such as the allowed number of self-compositions, which is here bounded by $O(\log n)$, whence the superscript $\log$,  and the allowed size of intermediate outputs in compositions, which is here polynomially bounded, whence the subscript $\pol$.}    

\begin{lemma}
\label{lem:failpath}
Given a {\DUAL} instance $I=(\GG,\HH)$ and a path descriptor $\pi\in \mbox{\it PD}(I)$, deciding \UNLIB{whether}\LIB{if} {\sc pathnode}($I,\pi$) outputs a leaf of $T(\GG,\HH)$ whose {mark}-component is {\FAIL} \FULL{is in} \PODS{is feasible \UNLIB{within}\LIB{with complexity}} {\em \SUPER}.
\end{lemma}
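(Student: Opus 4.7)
The strategy is to realise the required decision procedure as the two-stage composition demanded by the definition of \SUPER: first a functional transformation in $[\![\FDSPACE[\log n]_\pol]\!]^{\log}$ producing an intermediate output $O$, then a LOGSPACE decision on $O$.

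For the first stage, I would simply take the procedure {\sc pathnode}$(I,\pi)$ itself. The proof of Lemma~\ref{lem:pathnode} already expresses {\sc pathnode} as $f^{\ell(\pi)}(V,\Attr(\alpha_0),\pi)$, where $f$ is a logspace-computable function and $\ell\in Q_{\log}$. Each intermediate output of $f$ is either the string {\sc wrongpath} or a triple $(W,\Attr,\gamma)$ whose components---a vertex subset, an attribute record whose parts $\Label$, $S_\beta$, $\Instance(\beta)$, $\Mark(\beta)$, $t(\beta)$ are all bounded polynomially in $|I|$, and a suffix $\gamma$ of the original path descriptor---are of size polynomially bounded in $|I|$. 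Hence $f\in\FDSPACE[\log n]_\pol$, so that {\sc pathnode} is a function in $[\![\FDSPACE[\log n]_\pol]\!]^{\log}$.

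For the second stage, a LOGSPACE (in fact constant-space) decision procedure simply inspects the output $O$: it rejects if $O={}$\,{\sc wrongpath}, and otherwise accepts iff the Mark-component of $O$ equals \FAIL. No separate test ``$\alpha$ is a leaf'' is required, because by inspection of the construction of $T(\GG,\HH)$ the value \FAIL is assigned only in cases~1 and~4 of {\sc marksmall} and in step~2 of {\sc process}, and in each of these situations no children are created; conversely, every non-leaf of $T(\GG,\HH)$ retains $\Mark=\NIL$. Thus $\Mark=\FAIL$ already characterises precisely the \FAIL-marked leaves of $T(\GG,\HH)$.

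Composing the two stages gives a procedure in $[\![\FDSPACE[\log n]_\pol]\!]^{\log}\circ\mbox{\rm LOGSPACE}=\SUPER$, which yields the lemma. I do not foresee a serious obstacle: the only bookkeeping point is to verify the polynomial-output restriction on $f$, which is immediate since every intermediate attribute record is merely a refinement of data already present in the input instance and is therefore polynomially bounded in $|I|$.
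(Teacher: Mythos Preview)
Your proposal is correct and follows essentially the same approach as the paper: invoke the fact (from the proof of Lemma~\ref{lem:pathnode}) that {\sc pathnode} lies in $[\![\FDSPACE[\log n]_\pol]\!]^{\log}$, then post-compose with a LOGSPACE check for the \FAIL\ mark. Your additional remark that $\Mark=\FAIL$ already implies leafhood is a helpful elaboration not spelled out in the paper, but the overall argument is the same.
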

\begin{proof} 
The proof of Lemma~\ref{lem:pathnode} already shows that {\sc path\-node} \UNLIB{is}\LIB{lies in} 
 $[\![\FDSPACE[\log n]_\pol]\!]^{\log}$. Deciding whether {\sc pathnode}($I,\pi$) outputs a leaf of $T(\GG,\HH)$ whose {\em mark}-com\-po\-nent is {\FAIL} can thus be implemented by first executing {\sc pathnode}($I,\pi$), and then checking whether the output is a node labeled {\FAIL}. This is obviously in \LIB{}$[\![\FDSPACE[\log n]_\pol]\!]^{\log}\circ \mbox{\rm LOGSPACE}\PODS{\!}=\PODS{\!} \SUPER$\LIB{}\end{proof}

\FULL{Finally, we study the main class of this section:  $\GC(\log^2 n, \SUPER)$.}
\PODS{\LIB{We next}\UNLIB{Let us finally} consider  $\GC(\log^2 n, \SUPER)$, the main complexity class studied in this section.}
\begin{theorem} 
\label{theo:dualinclass}
{\em $\coDUAL\in \GC(\log^2 n, \SUPER)$}.
\end{theorem}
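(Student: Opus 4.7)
The plan is to exhibit $\coDUAL$ as a ``guess a path, check it leads to a \FAIL\ leaf'' procedure, using the machinery already developed. Concretely, given an input instance $I=(\GG,\HH)$, a membership witness for $\coDUAL$ will be a single path descriptor $\pi\in\PD(I)$ that points to a leaf of the decomposition tree $T(\GG,\HH)$ whose \Mark-component is \FAIL. By part~1 of Proposition~\ref{prop:bm}, $\HH\neq tr(\GG)$ if and only if $T(\GG,\HH)$ has at least one leaf marked \FAIL, so such a witness exists precisely when $I$ is a yes-instance of $\coDUAL$.

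First I would bound the length of the guess. A path descriptor is a sequence of at most $\lfloor\log|\HH|\rfloor$ integers, and by part~3 of Proposition~\ref{prop:bm} each integer is bounded by $|V|\cdot|\GG|$ and thus encoded in $O(\log n)$ bits. Hence the whole descriptor fits in $O(\log|\HH|\cdot\log n)=O(\log^2 n)$ bits, which meets exactly the bound required by the outer guess in $\GC(\log^2 n,\SUPER)$.

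Next I would specify the check. Given the guessed $\pi$, the verifier runs the procedure {\sc pathnode}$(I,\pi)$ and accepts iff its output is a node whose \Mark-component equals \FAIL\ (in particular, rejects on output {\sc wrongpath}). By Lemma~\ref{lem:failpath} this combined task is exactly what lies in $\SUPER$: the $[\![\FDSPACE[\log n]_\pol]\!]^{\log}$ stage simulates the iterated application of {\sc next} along $\pi$ to compute $\Attr(\alpha)$, and the final LOGSPACE stage simply inspects the \Mark-field of the produced attribute tuple.

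Putting the two pieces together gives the desired $\GC(\log^2 n,\SUPER)$ procedure: guess $O(\log^2 n)$ bits encoding $\pi$, then invoke the \SUPER\ verifier above. Correctness is immediate from Proposition~\ref{prop:bm}(1) together with Lemma~\ref{lem:pathnode} (which guarantees that {\sc pathnode} correctly reconstructs $\Attr(\alpha)$ whenever $\pi$ labels an actual node, and otherwise returns {\sc wrongpath}). I do not foresee a genuine obstacle here, since all the heavy lifting has been isolated in Lemmas~\ref{lem:pathnode} and~\ref{lem:failpath}; the only point that requires a moment of care is verifying that the guess length really is $O(\log^2 n)$ rather than, say, $O(\log n\cdot\log|I|)$ of unclear magnitude, but the bounds in Proposition~\ref{prop:bm} parts~2 and~3 make this immediate.
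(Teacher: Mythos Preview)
Your proposal is correct and follows essentially the same approach as the paper's own proof: guess a path descriptor $\pi$ (of $O(\log^2 n)$ bits) and check via Lemma~\ref{lem:failpath} that {\sc pathnode}$(I,\pi)$ returns a \FAIL-marked node. If anything, your write-up is slightly more explicit than the paper's in justifying the $O(\log^2 n)$ bit bound on $\pi$ from parts~2 and~3 of Proposition~\ref{prop:bm}.
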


\begin{proof}
In order to find a new transversal $t$ of $\GG$ with respect to $\HH$ for a {\DUAL} instance 
$I=(\GG,\HH)$, rather than computing the entire 
decomposition tree $T(\GG,\HH)$,  it is sufficient to  guess a branch of this tree that terminates in a leaf $\alpha$ labeled {\FAIL}, and then compute $t(\alpha)$. Guessing such a branch amounts to guess a  
path descriptor $\pi$ and then checking that {\sc pathnode($I,\pi$)} outputs a node marked {\sc fail}.
Guessing $\pi$ amounts to guess $\log^2 n$ bits, and this is all our guess-and-check algorithm guesses. Checking that $\pi$ is a {\FAIL} node is, by 
Lemma~\ref{lem:failpath} in  $\SUPER$, 
hence the overall computation \UNLIB{can be done within}\LIB{is in}  $\GC(\log^2 n, \SUPER)$.
\end{proof}

The last theorem of this section shows, as promised,  that $\GC(\log^2 n, \SUPER)$ is effectively a subclass of the tightest known \LIB{other} upper bounds that are most likely incomparable to each other:
 \LIB{the classes} DSPACE[${\log^2 n}$] and $\GC(\log^2n,{\rm P})=\beta_2{\rm P}$.

\begin{theorem}
\label{theo:intersection}
{\em $\GC(\log^2 n, \SUPER)\FULL{\subseteq}$}\, \PODS{$\subseteq$} 
{\em   DSPACE [${\log^2 n}${\rm ]}$\,\cap$\,$\GC(\log^2 n,{\rm PTIME})$}.
\end{theorem}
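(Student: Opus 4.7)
The plan is to establish the two containments separately: $\GC(\log^2 n,\SUPER)\subseteq \DSPACE[\log^2 n]$ and $\GC(\log^2 n,\SUPER)\subseteq \GC(\log^2 n,\mathrm{PTIME})$. Both reduce to understanding what $\SUPER$ itself is as a resource-bounded class, which by Lemma~\ref{Lemma1} on one hand and by a direct time analysis on the other hand sits comfortably inside $\DSPACE[\log^2 n]$ and inside $\mathrm{PTIME}$.

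For the first containment, I would simulate the guess-and-check deterministically by enumeration. Given a problem $L\in\GC(\log^2 n,\SUPER)$ with associated checker $A\in\SUPER$, construct a $\DSPACE[\log^2 n]$ machine that cycles through all candidate guess strings $J$ of length $c\log^2 n$; a binary counter for such $J$ fits in $O(\log^2 n)$ space and can be reused across iterations. For each $(I,J)$, we must decide membership in $A$. By the definition of $\SUPER$, $A$ consists of a function $f\in[\![\FDSPACE[\log n]_\pol]\!]^{\log}$ composed with a LOGSPACE decider $D$. Lemma~\ref{Lemma1} places $f$ in $\FDSPACE[\log^2 n]$, and the output of $f$ has size at most $n^{O(\log n)}$, so an index into it occupies $O(\log^2 n)$ bits. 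Running $D$ on $f(I,J)$ without materialising the intermediate string is then the standard space-composition trick: whenever $D$ requests its $j$-th input bit, we recompute that bit of $f(I,J)$ on the fly using $O(\log^2 n)$ workspace. This yields an overall $\DSPACE[\log^2 n]$ simulation, accepting iff some guess $J$ leads $D$ to accept.

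For the second containment, I would verify $\SUPER\subseteq\mathrm{PTIME}$ and then observe that guess-and-check is monotone in the checker class. Any $g\in\FDSPACE[\log n]_\pol$ is computable in polynomial time, because a logspace-bounded machine with polynomially bounded output runs in polynomial time. For $g^\rho$ with $\rho(I)=O(\log n)$ and with every intermediate output bounded by a fixed polynomial $\gamma(|I|)$, the $O(\log n)$ sequential applications of $g$ each take polynomial time on polynomial-size inputs, so $g^\rho$ is polynomial-time computable; hence $[\![\FDSPACE[\log n]_\pol]\!]^{\log}\subseteq\mathrm{FP}$. Composing with a $\mathrm{LOGSPACE}\subseteq\mathrm{P}$ decider keeps us inside $\mathrm{PTIME}$, so $\SUPER\subseteq\mathrm{PTIME}$. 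Consequently the checker for $L\in\GC(\log^2 n,\SUPER)$ is also a polynomial-time checker on the enlarged input, which places $L$ in $\GC(\log^2 n,\mathrm{PTIME})$.

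The only subtle step is the space composition in the first containment: one must ensure that during the on-the-fly recomputation of $f(I,J)$, the $O(\log^2 n)$ workspace is shared coherently with the $O(\log^2 n)$-bit guess counter, the index into $f(I,J)$, and the workspace used by $D$. Since all of these are individually $O(\log^2 n)$ and only a constant number of them are simultaneously active on the stack, the total is still $O(\log^2 n)$, and the argument goes through cleanly. Combining both containments yields the stated intersection bound.
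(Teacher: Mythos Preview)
Your proposal is correct and follows essentially the same approach as the paper: enumerate the $O(\log^2 n)$-bit guesses deterministically and use Lemma~\ref{Lemma1} plus space composition for the first inclusion, and show $\SUPER\subseteq\mathrm{PTIME}$ (via the $O(\log n)$-fold composition of polynomial-time procedures on polynomially bounded intermediate inputs) to obtain the second. Your treatment is in fact slightly more explicit than the paper's on the space-composition step, where you spell out that the output of $f$ may have size $n^{O(\log n)}$ so that $D$'s input index and workspace are $O(\log^2 n)$ bits; the paper merely asserts that composing an $\FDSPACE[\log^2 n]$ function with a LOGSPACE decider stays in $\DSPACE[\log^2 n]$.
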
   
\begin{proof}\LIB{\hspace{-0.0mm}}For the inclusion $$\GC(\log^2 n, \SUPER)\, \subseteq\, \mbox{\rm DSPACE [}{\log^2 n}\mbox{\rm ]},$$ note that a decision procedure  in \PODS{the complexity class} $\GC(\log^2 n, \SUPER)$
amounts to (i) guessing $O(\log^2 n)$ bits, which  can be simulated by an exhaustive enumeration of all possible guesses (under re-use of space), which is feasible in 
{\rm DSPACE [}${\log^2 n}${\rm]}, and (ii)  for each such simulated guess, performing a check \FULL{that lies} in \FULL{the complexity class}  
$[\![\FDSPACE[\log n]_\pol]\!]^{\log}\circ \mbox{\rm LOGSPACE}$. 
Since, by Lemma~1, $[\![\FDSPACE[\log n]_\pol]\!]^{\log}\subseteq \mbox{\rm DSPACE [}{\log^2 n}{\rm ]}$, and given that the composition of a function from the class
$\mbox{\rm FDSPACE [}{\log^2 n}\mbox{\rm ]}$
with a LOGSPACE computation yields a 
$\mbox{\rm DSPACE [}{\log^2 n}\mbox{\rm ]}$
decision procedure, the overall computation is in 
$\mbox{\rm DSPACE [}{\log^2 n}\mbox{\rm ]}$.

To establish the reverse inclusion  
  $$\mbox{$\GC(\log^2 n, \SUPER)\LIB{$ $}\subseteq \GC(\log^2 n,{\rm PTIME})$},$$ it 
\UNLIB{is sufficient to see that}\LIB{is obviously sufficient  to see that} 
 $\SUPER$ is contained in {\sc PTIME}. In fact, a decision procedure in {\SUPER} 
amounts to a pipelined execution of  $O(\log n)$ instantiations of a  logspace function $f$, where the intermediate results are guaranteed to be of polynomial size in the original input, followed by the application of a logspace Boolean procedure $g$. This can   be replaced by the pipelined  execution of $O(\log n)$ instances of  a PTIME procedure equivalent to $f$, followed by the application of a Boolean PTIME procedure equivalent to $g$. In total, this latter process is in PTIME because it amounts to a logarithmic number of invocations of a PTIME procedure, where each time the input size is bounded by a polynomial in the size $n$ of the overall input. Therefore,  $\SUPER\subseteq \mbox{\sc PTIME}$.
\end{proof}

\FULL{\section{Summary, Discussion, and Conclusion}}
\PODS{\section{{Discussion and Conclusion}}}
In this paper we have derived new complexity bounds for the \DUAL problem and its complement  {$\coDUAL$}, that show that these problems can, in principle, be implemented by space-efficient algorithms. These bounds are depicted in Figure~1 in relation to the other relevant complexity classes. Here, set-inclusion is  
visualized by ascending lines or paths. We believe that our results represent some progress in the long and rather tortuous battle towards a better understanding of the mysterious {\DUAL} problem. Our results are ---for the time being---  mainly of  theoretical interest, and we do not claim they have immediate practical consequences. In fact, it is currently not clear whether our space-efficient version of the algorithm by Boros and Makino has any {\em practical}

\LIB{

\phantom{supercalifragilistic}

\vspace{0.33cm}

\begin{figure}[\LIB{t}\UNLIB{h}]
\begin{center}
\includegraphics[trim = 0.5cm -1.0cm 1.0cm 1.0cm,  
height=8.5cm,width=0.58\textwidth]{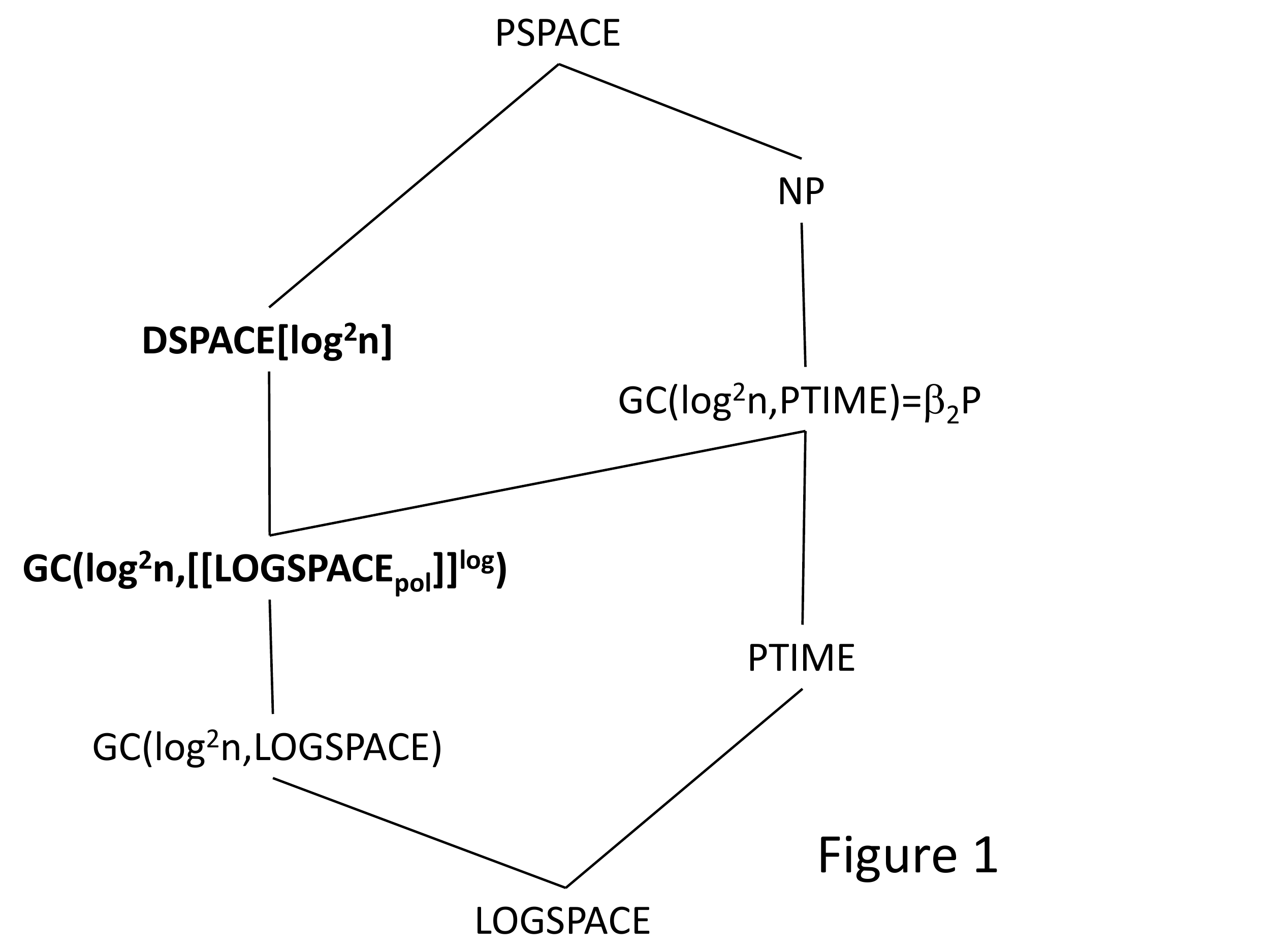}
\end{center}
\end{figure}
}
\PODS{\noindent}
\LIB{\hspace{-1mm}}
\FULL{
\begin{wrapfigure}{r}{0.6\textwidth}
\begin{center}
\includegraphics[trim = 0.0cm -1cm 0.9cm 1.0cm,  height=7.0cm,width=0.7\textwidth]{Figure1}
\end{center}
\end{wrapfigure}
}
{\noindent}\hspace{-1.5mm} advantage over its original version. Future research may look at the applicability of such space-efficient techniques in presence of extremely large hypergraphs or data relations. Our bounds do prove useful for other purposes, however.

For instance, the $O(\log^2 n)$  space bound helps telling the \LIB{}{\DUAL} problem apart from other problems that are candidates for completeness for intermediate classed between P and NP. For example, model-checking modal $\mu$-calculus formulas~\cite{EJS93}, and the equivalent problem of whether a given player has a winning strategy in a parity games on graphs~\cite{EJ91,ZIE98}, are such problems. They are not known to be tractable but lie 
in  UP\,$\cap$\,coUP~\cite{JUR98}, and are thus most likely not NP complete either. Given that 
these problems bear a certain superficial similarity to {\DUAL}, the question arises, whether they are actually disguised versions of sub-problems of  {\DUAL}, 
and can thus be reduced to {\DUAL} via simple low-level reductions (logspace or lower). By our results, this turns out not to be the case unless PTIME is in DSPACE[$\log^2 n$], which is highly unlikely. In fact,  the model-checking problem for the modal  
$\mu$-calculus and the equivalent parity game problem are known to be PTIME hard (see (\cite{HENZI03}). 


\nop{Firstly, the $O(\log^2 n)$  space bound indicates that there exist space-efficient algorithms, and this encourages us to look for {\em practical} space efficient
 solution methods  for {\DUAL} and its equivalent problems in data mining and in the database area.  We feel that space-efficiency may be an advantage, when dealing with big data stemming from financial transactions or biological experiments and so on. When mining terabytes of data, we might want to trade workspace (i.e., main memory) for runtime. Our results state that this is, in principle, feasible. \nop{Without any pretension, let us just dare a {\em Gedankenexperiment}. Assume we have a  terabyte of financial data to mine, e.g., in form of a  hypergraph whose vertices are a very large set of possible 
financial transactions and whose edges are sets of transactions that happen in the same millisecond. Since $\log^2(1TB)< 2MB$, at least the space requirement for the mining tasks are brought down to earth. } Future research will show whether our space bound can be exploited to come up with a reasonable algorithm.}

 We hope, moreover, that our results may help steering future  research towards a matching bound for the {\DUAL} problem. We have reasons not to believe that $\coDUAL$ is hard for \LIB{the class} $\GC(\log^2 n, \SUPER)$. This upper bound, however, gives us some intuition of where to dig further. For example, we conjecture that $\coDUAL$ lies in 
$\GC(\log^2 n, \mbox{\rm LOGSPACE})$, and hope to be able to prove this in the near future. 

Other future work will include the further analysis of hypergraph decomposition techniques to deal with the {\DUAL} problem. It is known that  {\DUAL} is tractable for hypergraphs of bounded degeneracy, and, in particular, for  acyclic hypergraphs~\cite{EGM03}. The latter coincide with  all hypergraphs of hypertree width 1 (see~\cite{gottlob-etal-98,GLS-survey,AGG07}). However, it was shown in~\cite{EGJelia} that for hypergraphs whose hypertree width is bounded by some constant $k\geq 2$, the  $\DUAL$ problem remains as hard as in the general case. It would thus be interesting to find hypergraph decomposition methods that are more general than 
bounded degeneracy and still lead to tractable $\DUAL$ instances. Other research directions are to look for new parameters that lead to tractable (or even fixed-parameter tractable) instances in case they are bounded. See, for instance,~\cite{hagenDiss} for some fixed-parameter tractability results, and~\cite{sacca} for a new parameter leading to tractability. 
\PODS{\UNLIB{
\bigskip

\begin{figure}[\LIB{t}\UNLIB{h}]
\begin{center}
\includegraphics[trim = 0.0cm -1cm 0.9cm 1.0cm,  
height=7.2cm,width=0.58\textwidth]{Figure1}
\end{center}
\end{figure}

}}

\section*{Acknowledgments}
This work was supported by 
\UNLIB{the EPSRC grant EP/G055114/1\ Constraint Satisfaction for Configuration: Logical
Fundamentals, Algorithms, and Complexity.} 
\LIB{the EPSRC grant {``Constraint Satisfaction for Configuration: Logical
Fundamentals, Algorithms, and Complexity"} (EP/G055114/1).} 
The author is  grateful to E.~Allender, E.~Boros, K.~Makino, E.~Malizia,  P.~Rossmanith, D. Sacc{\`a} and  H.~Vollmer for help with technical questions and references,  and to the reviewers for useful comments. 

\bibliographystyle{abbrv}
\bibliography{library}
\end{document}